\documentclass[aps,pre,reprint,superscriptaddress,longbibliography]{revtex4-2}
\usepackage{lmodern}
\usepackage[T1]{fontenc}
\usepackage[utf8]{inputenc}
\setcounter{secnumdepth}{3}
\usepackage{mathrsfs}
\usepackage{amsmath}
\usepackage{amsthm}
\usepackage{amssymb}
\usepackage{graphicx}
\usepackage{latexsym}
\usepackage{bm}
\usepackage{amsfonts}
\usepackage{babel}
\usepackage{tikz}
\usepackage{enumitem}
\usepackage[normalem]{ulem}

\usepackage{mathtools}

\usepackage{array}

\usepackage{pgfplots}
\pgfplotsset{compat=1.7}

\allowdisplaybreaks

\usepackage{hyperref}
\hypersetup{
    breaklinks = true,
    colorlinks = true,
    citecolor = {blue},
    urlcolor = {blue},
    linkcolor = {blue}
}

\hyphenpenalty=10000
\hbadness=10000

\newtheorem{lemma}{Lemma}
\newtheorem{theorem}{Theorem}

\newcommand{\yfu}[1]{{#1}}

\begin{document}

\title{
Energy-Momentum-Conserving Stochastic Differential Equations and Algorithms for Nonlinear Landau-Fokker-Planck Equation
}
\author{Yichen Fu}
\email{fu9@llnl.gov}
\date{\today}
\affiliation{
Lawrence Livermore National Laboratory
}

\author{Justin R. Angus}
\affiliation{
Lawrence Livermore National Laboratory
}

\author{Hong Qin}
\affiliation{Princeton Plasma Physics Laboratory and Department of Astrophysical Sciences, Princeton University}

\author{Vasily I. Geyko}
\affiliation{
Lawrence Livermore National Laboratory
}

\begin{abstract}
	Coulomb collision is a fundamental diffusion process in plasmas that can be described by the Landau-Fokker-Planck (LFP) equation or the stochastic differential equation (SDE). While energy and momentum are conserved exactly in the LFP equation, they are conserved only on average by the conventional corresponding SDEs, suggesting that the underlying stochastic process may not be well-defined by such SDEs. In this study, we derive new SDEs with exact energy-momentum conservation for the Coulomb collision by factorizing the collective effect of field particles into individual particles and enforcing Newton's third law. These SDEs, when interpreted in the Stratonovich sense, have a particularly simple form that represents pure diffusion between particles without drag. 
    \yfu{To demonstrate that the new SDEs correspond to the LFP equation, we develop numerical algorithms that converge to the SDEs and preserve discrete conservation laws. Simulation results are presented in benchmark of various relaxation processes.} 
\end{abstract}

\maketitle

\section{Introduction}

Collisional processes in fully ionized, weakly coupled plasmas are governed by cumulative small-angle Coulomb scattering. These collisions act as diffusion in the plasma's velocity space, described by the Landau-Fokker-Planck (LFP) equation \cite{rosenbluth1957fokker,helander2005collisional}. Coulomb collisions contribute to many important processes such as the classical \cite{braginskii1965transport}, neoclassical \cite{hinton1976theory}, and turbulent \cite{lin1999effects} transport in magnetized plasmas and collisional absorption of laser radiation in plasma \cite{dawson1962high,decker1994nonlinear}. Accurate understanding and calculation of Coulomb collisions are thus important in simulating these physical processes.

Stochastic differential equations (SDEs) are differential equations with stochastic terms, making them natural models for random physical processes like Brownian motion \cite{van1992stochastic}, and are closely connected with the Fokker-Planck (FP) equations. A physical quantity described by an SDE has its distribution function governed by a corresponding FP equation, and vice versa \cite{Kloeden1992numerical,risken1996fokker}. Thus, SDEs-based methods \cite{manheimer1997langevin,cohen2010time,cadjan1999langevin,lemons2009small,zheng2021issde,dimits2013higher,rosin2014multilevel,lu2024highorder} have been frequently utilized in particle simulations to represent Coulomb collisions. However, conventional SDEs corresponding to the LFP equation conserve energy and momentum only on average. This limitation implies that the underlying stochastic process of Coulomb collisions may not be well-defined by those SDEs. As such, numerical algorithms based on them cannot naturally hold conservation laws. 
\yfu{The breakdown of conservation laws in the collision operators may lead to erroneous numerical energy and momentum transport in spatially inhomogeneous plasma. In the worst case, noise-induced numerical cooling or heating instabilities \cite{lemons1995noise} may occur and degrade the simulation quality. Therefore, \textit{ad hoc} correctors need to be applied to ensure long-time accuracy \cite{cadjan1999langevin,lemons2009small}.} 
This disadvantage has become a major obstacle in understanding Coulomb collisions and developing SDE-based algorithms. 

In this paper, we construct new SDEs corresponding to the nonlinear LFP equation for Coulomb collisions while conserving energy and momentum exactly. The SDEs describe collisions within the same species (intra-species) and between different species (inter-species). In addition, in the Stratonovich sense, these new SDEs are particularly simple as pure stochastic diffusion without deterministic drag. \yfu{In order to show that the new SDEs correspond to the LFP equation, we also develop numerical algorithms that converge to the new SDEs and hold the same conservation laws exactly. They are benchmarked in simulations of intra-species temperature isotropization and inter-species temperature relaxation processes.} 
These results provide a new description of the underlying stochastic process of Coulomb collisions and the associated structure-preserving algorithms \cite{morrison2017structure,zhang2020simulating,fu2022explicitly,hirvijoki2021structure,zonta2022multispecies}.

\section{Fokker-Planck equations and SDE}

Consider the LFP equation describing collisions between two homogeneous plasma species $\alpha$ and $\beta$, which can be either the same or different species. Let $f_\alpha$ be the distribution function of species $\alpha$. Its time evolution due to Coulomb collision with species $\beta$ is given by \cite{rosenbluth1957fokker,helander2005collisional,villani1998new}
\begin{equation}
	\left(\dfrac{\partial f_{\alpha}}{\partial t}\right)_{\alpha\beta} = 
	- \dfrac{\partial}{\partial \mathbf{v}}\cdot 
	\left[
		\mu_{\alpha\beta} f_{\alpha}
		- \dfrac{1}{2} \dfrac{\partial}{\partial \mathbf{v}}\cdot 
		(D_{\alpha\beta} f_{\alpha})
	\right],
	\label{eq:FP}
\end{equation}
where the diffusion and drag coefficients $D_{\alpha\beta}(\mathbf{v})$ and $\mu_{\alpha\beta}(\mathbf{v})$ between the two species are
\begin{subequations}
    \label{eq:diffusion_drag}
    \begin{align}
    	D_{\alpha\beta}
    	& \doteq \dfrac{L_{\alpha\beta}}{m_{\alpha}^2}
    	\int a(\mathbf{v}-\mathbf{v}')f_{\beta}(\mathbf{v}') \,\mathrm{d}\mathbf{v}'.
    	\label{eq:diffusion} \\ 
    	\mu_{\alpha\beta}
    	& \doteq \dfrac{L_{\alpha\beta}}{2 m_{\alpha}}\left(\dfrac{1}{m_{\alpha}}+\dfrac{1}{m_{\beta}}\right)
    	\int b(\mathbf{v}-\mathbf{v}') f_{\beta}(\mathbf{v}') \,\mathrm{d}\mathbf{v}',
    	\label{eq:drag}
    \end{align}
\end{subequations}
Here, $L_{\alpha\beta} \doteq ({e_\alpha^2 e_{\beta}^2}/{4\pi \epsilon_0^2}) \ln\Lambda_{\alpha\beta}$; $e_\alpha, e_\beta$ are the charges of particles;  $m_\alpha, m_\beta$ are the masses; $\ln\Lambda_{\alpha\beta}$ is the Coulomb logarithm. Function $a(\mathbf{v})$ is defined as 
\begin{equation}
	a(\mathbf{v})\doteq 
	|\mathbf{v}|^{2+\gamma} \,\Pi(\mathbf{v}), \quad
	\Pi(\mathbf{v})\doteq
	\mathrm{I}_3-\dfrac{\mathbf{vv}^\intercal}{|\mathbf{v}|^2},
	\label{eq:projector}
\end{equation}
where $\mathbf{v}=(v_1,v_2,v_3)^\intercal$, $|\mathbf{v}|=\sqrt{v_1^2+v_2^2+v_3^2}$, $\mathrm{I}_3$ is the identity matrix in 3-D, and $\Pi$ is a $3\times 3$ symmetric idempotent matrix satisfying $\Pi^2 = \Pi = (\Pi)^\intercal$. In addition, $b(\mathbf{v})\doteq \partial_\mathbf{v}\cdot a(\mathbf{v})= -2|\mathbf{v}|^\gamma \mathbf{v}$. When $\gamma=-3$, Eq.~(\ref{eq:FP}) describes Coulomb collision where $\mu$ and $D$ can be written as derivatives of the Rosenbluth potentials \cite{rosenbluth1957fokker}. When $\gamma\in(-3,1]$, Eq.~(\ref{eq:FP}) has no obvious physical meaning but is still of mathematical interest \cite{villani1998new,villani2002review}.

For the velocity $\mathbf{v}^\alpha(t)$ of particles in species $\alpha$, the It\^{o} SDE corresponding to the Fokker-Planck equation is 
\begin{equation}
	\mathrm{d} \mathbf{v}^\alpha = \mu_{\alpha\beta}(\mathbf{v}^\alpha) \mathrm{d}t + \sqrt{D_{\alpha\beta} (\mathbf{v}^\alpha)} \, \mathrm{d} \mathbf{W},
	\label{eq:Ito_DDSDE}
\end{equation}
where $\mathbf{W}(t)$ is the 3-D Wiener process, and $\sqrt{D}$ is the square-root of $D$ satisfying $\sqrt{D}{\sqrt{D}{\,}}^\intercal = D$. Because $D$ is positive semi-definite, its square root exists in general. Because the coefficients $\mu_{\alpha\beta}$ and $D_{\alpha\beta}$ depend on velocity $\mathbf{v}^\alpha$, this diffusion process is nonlinear. In addition, since the coefficients depend on the distribution function $f_\beta(\mathbf{v})$, these SDEs are generally called distribution-dependent SDE \cite{barbu2020nonlinear,wang2018distribution}.

Equation~(\ref{eq:Ito_DDSDE}) only conserves energy and momentum on average, which means the variance of energy and momentum error can accumulate over time. Physically, it describes the interaction between the ``test particle'' $\mathbf{v}^\alpha$ and the collective effect of the ``field particles'' $f_\beta(\mathbf{v})$ \cite{cohen1950electrical}, so conservation laws cannot be enforced on each particle. To overcome this difficulty, SDEs need to include interactions between individual particles instead of their collection, which we will derive next.

\subsection{Intra-species collision}

A new SDE for intra-species collisions is derived in this section. For simplicity, species labels $\alpha,\beta$ are omitted here. Consider $N$ (macro) particles represented by random variables $\mathbf{v}^i(t)$, $i\in\{1,2,...,N\}$. The distribution function $f(\mathbf{v})$ can be approximated by
\begin{equation}
	f(\mathbf{v}, t) \approx \tilde{f}(\mathbf{v}, t) \doteq w \sum_{i=1}^N \delta[\mathbf{v} - \mathbf{v}^i(t)].
	\label{eq:particle_distribution}
\end{equation}
Here, $w$ is the weight of each particle, which is assumed to be the same for all particles and is related to the number density $n$ by $w=n/N$ so that $\int \tilde{f}(\mathbf{v}) \mathrm{d}\mathbf{v}= n$.

Let $\mathbf{u}^{ij} \doteq \mathbf{v}^i - \mathbf{v}^j$ be the relative velocities, so $\mathbf{u}^{ij} = - \mathbf{u}^{ji}$. For each random variable $\mathbf{v}^i(t)$, we can use the approximated distribution function $\tilde{f}(\mathbf{v})$ in Eq.~(\ref{eq:diffusion_drag}) and write the  governing SDE (\ref{eq:Ito_DDSDE}) as 
\begin{equation}
	\mathrm{d} \mathbf{v}^i =
	\dfrac{w L}{m^2} \sum_{j,\, j\neq i}
	b(\mathbf{u}^{ij}) \mathrm{d} t
	+
	\sqrt{
	\dfrac{w L}{m^2}  \sum_{j,\, j\neq i}
	a(\mathbf{u}^{ij}) 
	} \,\mathrm{d}\mathbf{W}^i, 
	\label{eq:derivation_pre_decomposition}
\end{equation}
where the term with $j=i$ has been excluded from the summation to avoid ``self-interaction''. Strictly speaking, we also need to replace the particle weight with $w=n/(N-1)$ since the number of field particles is now $(N-1)$. At this point, the main difficulty lies in the matrix square-root operation $\sqrt{\sum_{j, \,j\neq i} a (\mathbf{u}^{ij}) }$ \cite{cadjan1999langevin}, which is not unique in general.  One method is to use Cholesky decomposition \cite{press2007numerical} to generate a lower triangular matrix as a square-root \cite{zheng2021issde,lu2024highorder,hinton2008simulating}. The Cholesky decomposition is fast, but the resulting SDEs do not conserve energy or momentum. Furthermore, the structure of $a (\mathbf{u}^{ij})$ being proportional to the symmetric idempotent matrix $\Pi(\mathbf{u}^{ij})$ is lost after the Cholesky decomposition.

In the present study, we develop a new algorithm that conserves energy and momentum. The key technique introduced is to factorize the (infinitesimal) stochastic process $\sqrt{\sum_{j, \, j\neq i} a (\mathbf{u}^{ij}) } \, \mathrm{d}\mathbf{W}^i$ instead of taking the square-root of the diffusion matrix itself. Consider three independent 3-D Gaussian random variables with mean zero and identity covariance $\mathbf{X}_1,\mathbf{X}_2,\mathbf{X}_3 \sim \mathcal{N}(0,\mathrm{I}_3)$. Let $M_1, M_2$ be two positive semi-definite matrices and define random variables $\mathbf{Y}_1 \doteq \sqrt{M_1}\,\mathbf{X}_1 + \sqrt{M_2}\,\mathbf{X}_2$ and $\mathbf{Y}_2 \doteq \sqrt{M_1 + M_2}\, \mathbf{X}_3$. One can show that the components of $\mathbf{Y}_1$ and $\mathbf{Y}_2$ have the same joint distribution, $\mathbf{Y}_1 \sim \mathbf{Y}_2 \sim \mathcal{N}(0,M_1+M_2)$, so they can be treated as the same random variable. Therefore, since the Wiener processes $\mathrm{d}\mathbf{W}^i \sim \mathcal{N}(0, \mathrm{I}_3 \mathrm{d}t)$ can be regarded as Gaussian random variables, we can factorize the stochastic process by
\begin{equation}
	\sqrt{
	\sum_{j,\, j\neq i} a (\mathbf{u}^{ij})
	} \,\mathrm{d}\mathbf{W}^i
	= \sum_{j,\, j\neq i}
	\sqrt{a (\mathbf{u}^{ij})}
	\mathrm{d}\mathbf{W}^{ij},
	\label{eq:diffusion_matrix_decomposition}
\end{equation}
where $\mathbf{W}^{ij}(t)$ and $\mathbf{W}^i(t)$ are independent 3-D Wiener processes. Furthermore, $\sqrt{a}$ can be explicitly calculated:
\begin{equation}
	\sigma(\mathbf{u}) \doteq \sqrt{a(\mathbf{u})}
	= |\mathbf{u}|^{1+\gamma/2} \Pi(\mathbf{u}).
	\label{eq:sigma_def}
\end{equation}
We can verify that $\sigma \sigma^\intercal \equiv a$ due to $ \Pi (\Pi)^\intercal = \Pi^2 = \Pi$.  The rigorous proof of Eq.~(\ref{eq:diffusion_matrix_decomposition}) is given in Appendix~\ref{app:wiener_summation} as Lemma~\ref{lemma:wiener}. 

Equation (\ref{eq:diffusion_matrix_decomposition}) switches the order of summation and matrix square-root operation at the cost of increasing the number of independent 3-D Wiener processes from $N$ to $N(N-1)$. Physically, the left-hand side (LHS) of Eq.~(\ref{eq:diffusion_matrix_decomposition}) describes an interaction from the collection of field particles, while its right-hand side (RHS) describes an interaction from each individual particle. Combining Eqs.~(\ref{eq:derivation_pre_decomposition}) and (\ref{eq:diffusion_matrix_decomposition}), we get the SDEs:
\begin{equation}
	\mathrm{d} \mathbf{v}^{i} = 
	\sum_{\substack{j=1 \\ j\neq i}}^{N}
	\left[
		\dfrac{w L}{m^2} b(\mathbf{u}^{ij}) \mathrm{d} t
		+ \dfrac{\sqrt{w L}}{m}
		{\sigma (\mathbf{u}^{ij})}
	\mathrm{d}\mathbf{W}^{ij}
	\right], 
	\label{eq:intra_particle_sys}
\end{equation}
where $\mathbf{W}^{ij}(t)$ are independent 3-D Wiener processes. Equation~(\ref{eq:intra_particle_sys}) was first proposed in Ref.~\cite{fontbona2009measurability} to approximate the LFP equation for the case of $\gamma=0$. Our results are established for $\gamma=-3$, the only physically meaningful case. In this approach, we enforce Newton's third law between each particle to enable exact energy and momentum conservation \cite{carrapatoso2014propagation,fournier2017from}. Notice that SDE~(\ref{eq:intra_particle_sys}) can be regarded as $\mathrm{d} \mathbf{v}^i \sim \sum_{j} \mathbf{F}^{ij}$, where $\mathbf{F}^{ij}$ is the effective force on particle $i$ from particle $j$. Newton's third law requires $\mathbf{F}^{ij} = -\mathbf{F}^{ji}$. Since $b(-\mathbf{u})=-b(\mathbf{u})$ and $\sigma(-\mathbf{u})=\sigma(\mathbf{u})$, the Wiener processes have to be antisymmetric, i.e., $\mathbf{W}^{ij}(t) = -\mathbf{W}^{ji}(t)$, hence the number of independent 3-D Wiener processes is reduced to $N(N-1)/2$.

With the antisymmetric Wiener processes, SDE~(\ref{eq:intra_particle_sys}) conserves energy and momentum exactly.  The system's momentum is $\mathbf{P} \doteq \sum_{i} w m \mathbf{v}^{i}$, whose conservation is a direct result of Newton's third law. The conservation of energy $E \doteq \sum_{i} w m |\mathbf{v}^{i}|^2/2$ can be proved using It\^{o}'s lemma for stochastic calculus \cite{Kloeden1992numerical,oksendal2003stochastic} and is presented in Appendix~\ref{app:energy_momentum_conservation}.

In addition to the exact conservation laws, the It\^{o} SDE~(\ref{eq:intra_particle_sys}) has a particularly simple form in Stratonovich sense \cite{van1981ito}. A stochastic integral $\int F(t) \mathrm{d}W$ is interpreted as $\sum_i F(t_i)[W(t_{i+1}) - W(t_i)]$ by It\^{o} and as $\sum_i F(\frac{t_{i+1}+t_i}{2})[W(t_{i+1}) - W(t_i)]$ by Stratonovich. It\^{o}'s interpretation is preferred mathematically due to its martingale property, which means that for a stochastic process written as an It\^{o} integral $X_t = \int_0^t f(X_s)\mathrm{d}W_s$, its expectation $\langle X_{t} \rangle$ is always 0 when $f$ is well-behaved. Stratonovich's interpretation benefits from obeying the usual chain rule of calculus \cite{stratonovich1966new}. Nevertheless, SDEs in the It\^{o} sense can be transformed into the Stratonovich sense with the same diffusion but modified drag. One can show, as in Appendix~\ref{app:ito_to_stratonovich}, that SDE~(\ref{eq:intra_particle_sys}) in Stratonovich sense is
\begin{equation}
	\mathrm{d} \mathbf{v}^{i} = 
	\dfrac{\sqrt{w L}}{m}
	\sum_{\substack{j=1 \\ j\neq i}}^{N}
		\sigma (\mathbf{u}^{ij}) \circ \mathrm{d}\mathbf{W}^{ij}, 
	\quad
	\mathbf{W}^{ij}=-\mathbf{W}^{ji},
	\label{eq:intra_particle_sys_stratonovich}
\end{equation}
where $\circ$ denotes Stratonovich's interpretation. The exact cancellation of the drag term indicates that the stochastic force is purely diffusive in the Stratonovich sense.

\subsection{Inter-species collision}

The SDE we derived can be generalized to inter-species collision straightforwardly. Consider $N_\alpha$ particles in species $\alpha$ and $N_\beta$ particles in species $\beta$, represented by random variables $\mathbf{v}^{\alpha,i}$ and $\mathbf{v}^{\beta,j}$, $i\in\{1,2,...,N_\alpha\}$, $j\in \{1,2,...,N_\beta\}$. Their distribution functions can be approximated by $\tilde{f}_\alpha(\mathbf{v},t) \doteq w \sum_{i=1}^{N_\alpha} \delta[\mathbf{v} - \mathbf{v}^{\alpha,i}(t)]$ and $\tilde{f}_\beta(\mathbf{v},t) \doteq w \sum_{j=1}^{N_\beta} \delta[\mathbf{v} - \mathbf{v}^{\beta,j}(t)]$. Here, the particle weight is assumed to be the same for all particles in all species, so they are related to the particle number density by $w=n_\alpha/N_\alpha=n_\beta/N_\beta$.

Let relative speeds be $\mathbf{u}^{ij} \doteq \mathbf{v}^{\alpha,i}-\mathbf{v}^{\beta,j}$, so here $\mathbf{u}^{ij} \neq -\mathbf{u}^{ji}$. We can first derive the SDEs for inter-species collision in It\^{o} sense using $\tilde{f}_\alpha$ and $\tilde{f}_\beta$ in Eq.~(\ref{eq:diffusion_drag}), factorize the stochastic processes using Eq.~(\ref{eq:diffusion_matrix_decomposition}), we get the It\^{o} SDE for inter-species collision:
\begin{widetext}
    \begin{subequations}
        \label{eq:inter_particle_sys}
        \begin{align}
        	\mathrm{d} \mathbf{v}^{\alpha,i} 
        	&= \sum_{j=1}^{N_\beta}
        	\left[
        		\dfrac{ wL_{\alpha\beta}}{2 m_{\alpha}}\left(\dfrac{1}{m_{\alpha}}+\dfrac{1}{m_{\beta}}\right)
        		b(\mathbf{u}^{ij}) \mathrm{d}t + 
        		\dfrac{\sqrt{w L_{\alpha\beta}}}{m_\alpha}
        		\sigma (\mathbf{u}^{ij}) \mathrm{d} \mathbf{W}^{ij}
        	\right], 
        	\label{eq:inter_particle_sys_1} \\
        	\mathrm{d} \mathbf{v}^{\beta,j} 
        	&= - \sum_{i=1}^{N_\alpha}
        	\left[
        		\dfrac{ wL_{\alpha\beta}}{2 m_{\beta}}\left(\dfrac{1}{m_{\alpha}}+\dfrac{1}{m_{\beta}}\right)
        		b(\mathbf{u}^{ij}) \mathrm{d}t + 
        		\dfrac{\sqrt{w L_{\alpha\beta}}}{m_\beta}
        		\sigma (\mathbf{u}^{ij}) \mathrm{d} \mathbf{W}^{ij}
        	\right],
        	\label{eq:inter_particle_sys_2}
        \end{align}
    \end{subequations}
\end{widetext}
where $\mathbf{W}^{ij}(t)$ are $N_\alpha N_\beta$ independent 3-D Wiener processes. Those SDEs can also be transformed to the Stratonovich sense, which are
\begin{subequations}
\label{eq:inter_particle_sys_stratonovich}
    \begin{align}
        	\mathrm{d} \mathbf{v}^{\alpha,i} 
    	&= \dfrac{\sqrt{w L_{\alpha\beta}}}{m_\alpha} \sum_{j=1}^{N_\beta}
    	\sigma (\mathbf{u}^{ij}) \circ \mathrm{d} \mathbf{W}^{ij}, 
    	\label{eq:inter_particle_sys_stratonovich_1} \\
    	\mathrm{d} \mathbf{v}^{\beta,j} 
    	&= - \dfrac{\sqrt{w L_{\alpha\beta}}}{m_\beta} \sum_{i=1}^{N_\alpha}
    		\sigma (\mathbf{u}^{ij}) \circ \mathrm{d} \mathbf{W}^{ij},
      \label{eq:inter_particle_sys_stratonovich_2}
    \end{align}
\end{subequations}
The minus signs in Eq.~ (\ref{eq:inter_particle_sys_2}) and (\ref{eq:inter_particle_sys_stratonovich_2}) represent Newton's third law. Similar to the inter-species case, the SDEs preserve energy and momentum exactly.

It is interesting to discuss a limiting case where $m_\alpha$ remains constant but $m_\beta \to \infty$. This is the case in the electron-ion collision and is known as the pitch-angle scattering. Since $\sigma_{\beta\alpha} \sim 1 / m_\beta \to 0$, we get $\mathrm{d}\mathbf{v}^{\beta,j} \to 0$, which means particles in species-$\beta$ barely move due to their heavy mass. So, we can approximate $\mathbf{v}^{\beta,j}$ as time-independent. Furthermore, the distribution function of heavy species is almost a single delta function if it has the same temperature as the light species. Therefore, we can replace all $\mathbf{v}^{\beta,j}$ with a single constant speed $\mathbf{v}^{\beta}$. 

Next, we consider the motion of the light species, which is described by Eq.~(\ref{eq:inter_particle_sys_stratonovich_1}). Let $\mathbf{u}^i \doteq \mathbf{v}^{\alpha,i}  - \mathbf{v}^{\beta}$, we have 
\begin{equation}
    \mathrm{d} \mathbf{v}^{\alpha,i} 
	= \sum_{j=1}^{N_\beta} 
	\sigma_{\alpha\beta} (\mathbf{u}^i) \circ \mathrm{d} \mathbf{W}^{ij} 
    = \sigma_{\alpha\beta} (\mathbf{u}^i) \circ 
     \sum_{j=1}^{N_\beta} \mathrm{d} \mathbf{W}^{ij}.
\end{equation}
Since $\mathbf{W}^{ij}$ are independent 3-D Wiener processes, we can apply Eq.~(\ref{eq:diffusion_matrix_decomposition}) and get:
\begin{equation}
    \begin{split}
    	\mathrm{d} \mathbf{v}^{\alpha,i} 
    	& = \sqrt{N_\beta} \, \sigma_{\alpha\beta} (\mathbf{u}^i) \circ 
    	\mathrm{d} \mathbf{W}^i \\
    	& = \dfrac{\sqrt{w N_\beta L_{\alpha\beta}}}{m_\alpha} \big|\mathbf{u}^{i}\big|^{1+\gamma/2} \Pi(\mathbf{u}^{i})
    	\circ \mathrm{d} \mathbf{W}^i. 
    \end{split}
\end{equation}
Notice we have $w N_\beta = n_\beta$, so in the Coulomb collision with $\gamma=-3$, the SDE becomes:
\begin{align}
	\mathrm{d} \mathbf{v}^{\alpha,i} 
     & = \sqrt{\dfrac{\bar{L}}{|\mathbf{u}^{i}|}}
	\left( \mathrm{I}_3 - \dfrac{\mathbf{u}^{i}(\mathbf{u}^{i})^\intercal}{|\mathbf{u}^{i}|^2} \right)
	\circ \mathrm{d} \mathbf{W}^i.
	\label{eq:pitch_angle}
\end{align}
where $\bar{L} \doteq ({e_\alpha^2 e_{\beta}^2 n_\beta}/{4\pi \epsilon_0^2 m_\alpha^2}) \ln\Lambda_{\alpha\beta}$. In particular, when the heavy species is motionless, $\mathbf{v}^\beta=0$, $\mathbf{u}^i = \mathbf{v}^{\alpha,i}$, the SDE~(\ref{eq:pitch_angle}) returns to the SDE for pitch-angle scattering discussed in Refs.~\cite{zhang2020simulating,fu2022explicitly}.

\section{Numerical evaluation of SDEs}

\subsection{Numerical algorithms}

\yfu{We now look for numerical algorithms that converge to SDEs and preserve the conservation laws in discrete time, so that we can numerically demonstrate the correspondence between the new SDEs are the LFP equation.} SDEs in the Stratonovich sense can be calculated naturally using the midpoint scheme \cite{milstein2002numerical,milstein2004stochastic}. However, since the fully implicit midpoint scheme has to be solved by iterative methods such as the fixed-point iteration \cite{hoffman2018numerical}, it may fail to converge when the time step is not small enough \cite{zonta2022multispecies}. Here, we propose a modified midpoint scheme that is explicitly solvable and conserves energy-momentum exactly at finite time steps. 

We discretize the time domain $t\in[0,T]$ to $\{t_k\}$ with $t_{k+1} = t_k + \Delta t$ and the time step $\Delta t$. Define $\mathbf{v}^{i}_{k} \doteq \mathbf{v}^{i}(t_{k})$ and $\mathbf{u}_k^{ij} \doteq \mathbf{v}^{i}_{k} - \mathbf{v}^{j}_{k}$. The quantities in the half step are defined as $\mathbf{v}^{i}_{k+1/2} \doteq (\mathbf{v}^{i}_{k}+\mathbf{v}^{i}_{k+1})/2$. Noticing that $\sigma (\mathbf{u}) \circ \mathrm{d}\mathbf{W} \sim (\mathbf{u}\times \mathrm{d}\mathbf{W})\times \mathbf{u}$, we use a mixture of implicit and explicit methods \cite{zhang2020simulating,fu2022explicitly} to discretize SDE~(\ref{eq:intra_particle_sys_stratonovich}). To simplify the notation in this section, we define 
\begin{equation}
	\boldsymbol{\Omega}^{ij}_k \doteq \dfrac{\mathbf{u}^{ij}_k \times \Delta \mathbf{W}^{ij} }{|\mathbf{u}^{ij}_k|^{1-\gamma/2}}.
\end{equation}
So, SDE~(\ref{eq:intra_particle_sys_stratonovich}) can be discretized as 
\begin{equation}
	\mathbf{v}^{i}_{k+1} - \mathbf{v}^{i}_{k}
	= \dfrac{\sqrt{w L}}{m}
	\sum_{\substack{j=1 \\ j\neq i}}^{N}
	\boldsymbol{\Omega}^{ij}_k \times \mathbf{u}_{k+\frac{1}{2}}^{ij},
	\label{eq:modified_midpoint_intra}
\end{equation}
where we have $\Delta \mathbf{W}^{ij} = -\Delta \mathbf{W}^{ji} \sim \mathcal{N}(0, \mathrm{I}_3 \Delta t)$. 

Similarly, for inter-species collision, SDE~(\ref{eq:inter_particle_sys_stratonovich}) can be discretized as 
\begin{equation}
    \begin{split}
        \mathbf{v}^{\alpha,i}_{k+1} - \mathbf{v}^{\alpha,i}_{k}
        &= \dfrac{\sqrt{w L_{\alpha\beta}}}{m_\alpha}
        \sum_{j=1}^{N_\beta}
        \boldsymbol{\Omega}^{ij}_k \times \mathbf{u}_{k+\frac{1}{2}}^{ij}, \\
        \mathbf{v}^{\beta,j}_{k+1} - \mathbf{v}^{\beta,j}_{k}
        &= - \dfrac{\sqrt{w L_{\alpha\beta}}}{m_\beta} 
        \sum_{i=1}^{N_\alpha}
        \boldsymbol{\Omega}^{ij}_k \times \mathbf{u}_{k+\frac{1}{2}}^{ij}.
    \end{split}
    \label{eq:modified_midpoint_inter}
\end{equation}
It is straightforward to prove that the algorithm holds conservation laws exactly. The momentum conservation is a direct result of Newton's third law, while the energy conservation is shown in Appendix~\ref{app:numerical_conservation}. Since the RHS of Eq.~(\ref{eq:modified_midpoint_intra}) and (\ref{eq:modified_midpoint_inter}) depends on $\mathbf{v}_{k+1}$ linearly, they can be solved explicitly, which is presented in Appendix~\ref{app:explicit_solution_to_modified_midpoint}. In addition, as shown in Appendix~\ref{app:convergence_of_modified_midpoint}, this algorithm converges to SDE~(\ref{eq:intra_particle_sys_stratonovich}) and (\ref{eq:inter_particle_sys_stratonovich}) with strong order $1/2$.

\subsection{Benchmark with relaxation processes}

Here, we benchmark the numerical solutions of SDEs with analytical results in relaxation processes for both intra- and inter-species Coulomb collision with $\gamma=-3$. 

The first case is the isotropization of temperatures along different directions within one species. Assume the plasma is Maxwellian parallel and perpendicular to $z$-axis with temperature $T_\parallel$ and $T_\perp$. Coulomb collision will drive temperature in two directions towards equilibrium, which is governed by \cite{ichimaru1970relaxation,richardson20192019}: 
\begin{equation}
	\dfrac{\mathrm{d} T_\perp}{\mathrm{d} t} = 
	-\dfrac{1}{2} \dfrac{\mathrm{d} T_\parallel}{\mathrm{d} t}
	= \tau_\mathrm{iso}^{-1} (T_\parallel - T_\perp).
\end{equation}
Let $A\doteq T_\perp/T_\parallel - 1>0$, the isotropization time $\tau_\mathrm{iso}$ is
\begin{equation}
	\tau_\mathrm{iso}^{-1} \doteq 
	\dfrac{e^4 n \ln\Lambda }{8\pi^{3/2}\epsilon_0^2 \sqrt{m}\, T_\parallel^{3/2}}
	A^{-2} f(A), 
\end{equation}
where $f(A)=(A+3)\arctan(\sqrt{A})/\sqrt{A}-3$. A numerical benchmark is presented in Fig.~\ref{fig:relaxations}(a), (b) with initial temperature $T_\perp, T_\parallel=4, 1$. The process is simulated by $2^8$ particles and time step $\Delta t = 10^{-2} \tau_\mathrm{iso,0}$, where $\tau_\mathrm{iso,0}$ is the initial isotropization time. For convenience, $m$, $n$, $\epsilon_0$, $\ln\Lambda$ were set to 1. This process is simulated in $2^{11}$ ensembles where the initial distribution and time evolution are sampled independently. The averaged temperature is shown in Fig.~\ref{fig:relaxations}(a) in a log-time scale, which coincides with the analytical solution. The average absolute error in energy and momentum in Fig.~\ref{fig:relaxations}(b) is around machine precision, with the momentum error being larger due to the initial momentum being small.

The second case is the relaxation of different isotropic temperatures between two species. Assuming two species are Maxwellian with temperatures $T_\alpha$ and $T_\beta$, their relaxation due to Coulomb collisions is given by \cite{wesson2011tokamaks}
\begin{equation}
	\dfrac{\mathrm{d}T_\alpha}{\mathrm{d} t} = \tau_{\alpha\beta}^{-1} ( T_\beta - T_\alpha ),
\end{equation}
where the relaxation time $\tau_{\alpha\beta}$ is
\begin{equation}
	\tau_{\alpha\beta}^{-1} \doteq
	\dfrac{e_\alpha^2 e_\beta^2 n_\beta \ln\Lambda_{\alpha\beta} }{3\sqrt{2} \pi^{3/2} \epsilon_0^2 m_\alpha m_\beta}
	\left(\dfrac{T_\alpha}{m_\alpha} + \dfrac{T_\beta}{m_\beta} \right)^{-3/2}.
\end{equation} 
In the numerical example shown in Fig.~\ref{fig:relaxations}(c),(d), we set temperatures $T_1, T_2=4, 1$, masses $m_1, m_2=1,5$, charges $e_1, e_2=2,-1$, densities $n_1, n_2 = 1, 2$, and other constants as 1. Species 1 and 2 are represented by $2^6$ and $2^7$ particles proportional to their density. The time step is chosen to be $\Delta t = 10^{-3} \tau_{11,0}$, where $\tau_{11,0}$ is the initial relaxation time in species 1. The process is, again, simulated in $2^{11}$ ensembles where both inter- and intra-species collision are turned on. The average temperature in Fig.~\ref{fig:relaxations}(c) equilibrates slightly slower than the analytical solution, which may be because the time step is not small enough. The energy and momentum conservation is also around machine precision as shown in Fig.~\ref{fig:relaxations}(d).
\begin{figure}[ht]
	\centering 
	\includegraphics[width=0.95\columnwidth]{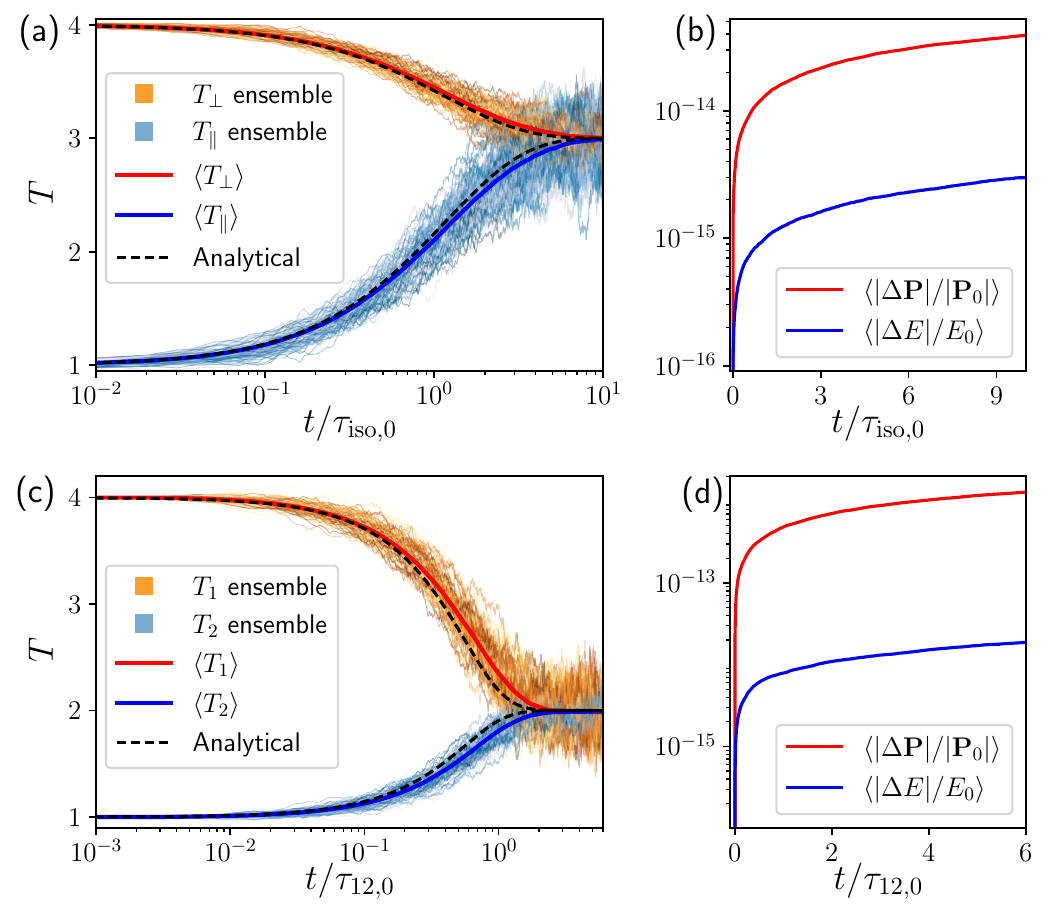} 
	\caption{Benchmark of relaxation processes. (a),(b): intra-species temperature isotropization. (c),(d): inter-species temperature relaxation. (a), (c): the first 100 sample patches are shown with a color gradient, and the average temperature and analytical solution are shown in solid and dashed lines. (b), (d) are average absolute errors in energy and momentum.}
	\label{fig:relaxations}
\end{figure}

Overall, the two cases present good agreement between 3-D simulations based on SDEs and the analytical theory. Therefore, SDEs~(\ref{eq:intra_particle_sys_stratonovich})-(\ref{eq:inter_particle_sys_stratonovich}) indeed approximate Coulomb collisions in plasmas as an energy-momentum-conserving stochastic process.

\section{Discussion}

This paper develops new SDEs~(\ref{eq:intra_particle_sys_stratonovich})-(\ref{eq:inter_particle_sys_stratonovich}) that correspond to the LFP equation~(\ref{eq:FP}) for both intra- and inter-species collisions. The new SDEs preserve energy and momentum exactly and, therefore, provide a better description of the underlying stochastic process for Coulomb collisions. Numerical algorithms that preserve the conservation laws exactly in discrete time are constructed and benchmarked with analytical solutions for intra-species temperature isotropization and inter-species temperature relaxation. 

The SDEs derived for the LFP equation hold exact conservation laws only for equally weighted particles. In certain applications, unequally weighted particles are preferable due to the scale separation of density or cell volume in the system \cite{miller1994coulomb,nanbu1998weighted,higginson2020corrected,angus2024moment}. Further modification will be necessary to extend the current SDE to unequal weight. Additionally, Coulomb collisions in gyrokinetics can also be described by the FP equation \cite{brizard2004guiding,hirvijoki2013monte}, so SDEs may also be derived utilizing the decomposition~(\ref{eq:diffusion_matrix_decomposition}) and enforcement of Newton's third law. However, more investigation is needed to study whether the resulting SDE preserves conservation laws or whether it is pure diffusion in the Stratonovich sense.

\section{Acknowledgment}
This research was supported by the US Department of Energy through Contracts DE-AC52-07NA27344 (LLNL), DE-AC02-09CH1146 (PPPL), and LLNL-JRNL-870465. This work was also supported by the LLNL-LDRD Program under Project No. 23-ERD-007.

\appendix
\section{Summation of Wiener processes \label{app:wiener_summation}}

In this section, we prove a lemma regarding the summation of Wiener processes, which is useful in deriving SDEs. Consider $N$ constant $d\times d$ positive semi-definite matrices $M_i$, $i\in\{1,2,...,N\}$ and $N+1$ independent $d$-dimensional Wiener processes $\mathbf{W}_i(t)$, $i\in\{0,1,...,N\}$. Each Wiener process is a column vector
\begin{equation}
	\mathbf{W}_i(t) \doteq [W_{i,1}(t),W_{i,2}(t), \cdots, W_{i,d}(t)]^\intercal, 
\end{equation}
where $W_{i,j}(t) - W_{i,j}(0) \sim \mathcal{N}(0,t)$ and $\mathcal{N}(0,t)$ indicate normal distribution with mean $0$ and variance $t$. For any matrix $M$, $\sqrt{M}$ is defined as a matrix decomposition satisfying $\sqrt{M} {\sqrt{M}\,}^\intercal=M$, which exists if $M$ is positive semi-definite.

Now, define two random processes $\mathbf{X}_1(t)$ and $\mathbf{X}_2(t)$ as 
\begin{subequations}
\begin{align}
	\mathbf{X}_1(t) &\doteq \sum_{i=1}^N \sqrt{M_i} \,\mathbf{W}_i(t), \\
	\mathbf{X}_2(t) &\doteq \sqrt{\textstyle \sum_{i=1}^N M_i} \, \mathbf{W}_0(t).
\end{align}
\end{subequations}
It is straightforward to see that each component in $\mathbf{X}_1(t)$ and $\mathbf{X}_2(t)$ is the summation of Wiener processes with different variances. Furthermore, it can be proved that $\mathbf{X}_1(t)$ and $\mathbf{X}_2(t)$ can be regarded as the same random process in the following sense. 
At any $t>0$, let the joint distribution of $d$ components in $\mathbf{X}_i(t)$ be $P_i(x_{1},x_{2},...,x_{d},t)$. We have:
\begin{lemma}
	Random processes $\mathbf{X}_1(t)$ and $\mathbf{X}_2(t)$ are the same process in the sense that at any time $t>0$, the components in $\mathbf{X}_1(t)$ and components in $\mathbf{X}_2(t)$ have the same joint distribution, i.e., $P_1(x_{1},x_{2},...,x_{d},t)=P_2(x_{1},x_{2},...,x_{d},t)$.
	\label{lemma:wiener}
\end{lemma}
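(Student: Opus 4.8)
The plan is to exploit the fact that, at each fixed time $t>0$, both $\mathbf{X}_1(t)$ and $\mathbf{X}_2(t)$ are centered multivariate Gaussian vectors, together with the elementary fact that a multivariate Gaussian law is completely determined by its mean vector and covariance matrix. Since the two processes are built from the same matrices $M_i$ but driven by different (independent) Wiener processes, there is no pathwise identity to hope for; the content of the lemma is purely the matching of one-time marginal distributions, which reduces to matching two finite-dimensional parameters.

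First I would establish that each $\mathbf{X}_a(t)$, $a\in\{1,2\}$, is multivariate Gaussian. For fixed $t$ we have $\mathbf{W}_i(t)\sim\mathcal{N}(0,t\,\mathrm{I}_d)$, and a fixed linear image $\sqrt{M_i}\,\mathbf{W}_i(t)$ of a Gaussian vector is again Gaussian; an independent sum of Gaussian vectors is Gaussian, so $\mathbf{X}_1(t)$ is Gaussian, while $\mathbf{X}_2(t)=\sqrt{\sum_i M_i}\,\mathbf{W}_0(t)$ is manifestly Gaussian. Both have mean zero because every $\mathbf{W}_i(t)$ does.

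Next I would compute the two covariance matrices. For $\mathbf{X}_2$, using $\langle \mathbf{W}_0(t)\mathbf{W}_0(t)^\intercal\rangle=t\,\mathrm{I}_d$ and the defining property $\sqrt{M}\,\sqrt{M}^\intercal=M$, one gets $\langle \mathbf{X}_2(t)\mathbf{X}_2(t)^\intercal\rangle=t\sum_{i=1}^N M_i$. For $\mathbf{X}_1$, expanding the double sum and invoking independence through $\langle \mathbf{W}_i(t)\mathbf{W}_j(t)^\intercal\rangle=t\,\delta_{ij}\,\mathrm{I}_d$, the cross terms vanish and $\langle \mathbf{X}_1(t)\mathbf{X}_1(t)^\intercal\rangle=\sum_{i=1}^N \sqrt{M_i}\,(t\,\mathrm{I}_d)\,\sqrt{M_i}^\intercal = t\sum_{i=1}^N M_i$. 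The two covariances coincide, hence the two centered Gaussian laws coincide, which is exactly $P_1=P_2$.

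If I preferred not to invoke the Gaussian-characterization theorem by name, I would instead compute characteristic functions directly: by independence and the identity $\sqrt{M_i}\,\sqrt{M_i}^\intercal=M_i$ one has $\phi_{\mathbf{X}_1}(\boldsymbol{\xi})=\prod_i \exp(-\tfrac{t}{2}\boldsymbol{\xi}^\intercal M_i \boldsymbol{\xi})=\exp(-\tfrac{t}{2}\boldsymbol{\xi}^\intercal(\sum_i M_i)\boldsymbol{\xi})=\phi_{\mathbf{X}_2}(\boldsymbol{\xi})$, and uniqueness of characteristic functions closes the argument. The computation is entirely routine; the only point requiring genuine care is the bookkeeping in the covariance of $\mathbf{X}_1$, namely justifying that the off-diagonal terms drop out by independence while the diagonal terms collapse via $\sqrt{M_i}\,\sqrt{M_i}^\intercal=M_i$. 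Thus the main (and mild) obstacle is being precise about the Gaussian-characterization step rather than any real analytic difficulty.
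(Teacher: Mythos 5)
Your proof is correct and follows essentially the same route as the paper's: both arguments note that $\mathbf{X}_1(t)$ and $\mathbf{X}_2(t)$ are centered multivariate Gaussians and then match the covariance matrices using $\langle \mathbf{W}_i(t)\mathbf{W}_j^\intercal(t)\rangle = t\,\delta_{ij}\,\mathrm{I}_d$ and $\sqrt{M_i}\,{\sqrt{M_i}\,}^\intercal = M_i$. Your version is in fact slightly more careful than the paper's (you keep the factor of $t$ in both covariances, which the paper's displayed equations drop, and your characteristic-function fallback makes the Gaussian-characterization step fully explicit), but the underlying argument is identical.
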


\begin{proof}
	Because each component of $\mathbf{X}_1(t)$ and $\mathbf{X}_2(t)$ are the summation of Wiener processes, at a given time $t$, their joint distribution must be multivariate Gaussian distributions with mean 0. Therefore, it is enough to prove that the components of $\mathbf{X}_1(t)$ and $\mathbf{X}_2(t)$ have the same covariance matrix.

	The covariance matrix of $\mathbf{X}_1(t)$ can be calculated as follows. Let $\langle \cdot \rangle$ denote the expectation of a random variable, we have
     \begin{equation}
        \begin{split}
            \langle \mathbf{X}_1(t) \mathbf{X}_1^\intercal(t) \rangle
            &= \left\langle \sum_{i,j} \sqrt{M_i} \,\mathbf{W}_i(t) \mathbf{W}_j^\intercal (t) {\sqrt{M_j}\,}^\intercal \right\rangle \\
            &= \sum_{i,j} \sqrt{M_i} \, \left\langle \mathbf{W}_i(t) \mathbf{W}_j^\intercal (t)\right\rangle {\sqrt{M_j}\,}^\intercal
        \end{split}
     \end{equation}
	Because $\mathbf{W}_i(t)$ are independent $d$-dimensional Wiener processes, we have
	\begin{equation}
		\left\langle \mathbf{W}_i(t) \mathbf{W}_j^\intercal (t)\right\rangle_{\mu\nu}
		= \left\langle {W}_{i,\mu}(t) {W}_{j,\nu} (t)\right\rangle = t \,\delta_{ij} \delta_{\mu\nu},
	\end{equation}
    where $\mu, \nu \in \{ 1, 2, ..., d \}$. So the covariance matrix becomes
	\begin{equation}
		\langle \mathbf{X}_1(t) \mathbf{X}_1^\intercal(t)\rangle = 
		\sum_{i,j} \sqrt{M_i} \, \delta_{ij} \mathrm{I}_d {\sqrt{M_j}\,}^\intercal
		=\sum_{i=1}^N M_i,
	\end{equation}
	where $\mathrm{I}_d$ is a $d\times d$ identity matrix. On the other hand, the covariance matrix of $\mathbf{X}_2(t)$ is
	\begin{subequations}
         \begin{align}
            & \langle\mathbf{X}_2(t) \mathbf{X}_2^\intercal(t)\rangle\\
            = & \sqrt{\textstyle \sum_{i} M_i } \, \langle \mathbf{W}_0(t) \mathbf{W}_0^\intercal (t)\rangle {\sqrt{\textstyle \sum_{i} M_i }\,}^\intercal \\
            = & \sqrt{\textstyle \sum_{i} M_i } \, \mathrm{I}_d {\sqrt{\textstyle \sum_{i} M_i }\,}^\intercal
            = \sum_{i=1}^N M_i .
          \end{align}
	\end{subequations}
	Therefore, the covariance matrices of $\mathbf{X}_1(t)$ and $\mathbf{X}_2(t)$ are the same. 
\end{proof}

Lemma~\ref{lemma:wiener} can be regarded as the matrix extension of the following summation rule of Gaussian random variables. Let $X, Y, Z \sim \mathcal{N}(0,1)$ be independent Gaussian random variables with mean 0 and variance 1. For two constants $a,b > 0$, we have $\sqrt{a}\,X + \sqrt{b}\,Y \sim \sqrt{a+b}\, Z \sim \mathcal{N}(0, a+b)$. Lemma~\ref{lemma:wiener} can also be applied to SDEs so that we have:
\begin{equation}
	\sum_{i=1}^N \sqrt{M_i} \,\mathrm{d} \mathbf{W}_i(t) = 
	\sqrt{\textstyle \sum_{i=1}^N M_i } \, \mathrm{d}\mathbf{W}_0(t).
\end{equation}

\section{Properties of SDEs}

This section presents rigorous proof of various properties of SDEs discussed in the main article. For simplicity, we only present the proof for the inter-species collision. The proof for the intra-species case will be an analogy to the inter-species case with equal mass. We will first rewrite the SDE to better label its indices, then prove its conservation law, transform it to Stratonovich form, and finally discuss its singularities.

\subsection{SDE and its indices}

The SDE for inter-species collision is presented in Eq.~(\ref{eq:inter_particle_sys}) whose indices are slightly different from the standard ones. For example, let $X(t)$ be a $d$-dimensional random process, a standard SDE will be written as 
\begin{equation}
	\mathrm{d}X^i = e^i \mathrm{d}t + \sum_{j=1}^N F^{ij} \mathrm{d}W^j, \quad i\in\{1,2,...,d\}.
	\label{eq:standard_SDE}
\end{equation}
In our SDE~(\ref{eq:inter_particle_sys}), the index $i$ is associated with species $\alpha$, and the index $j$ is associated with species $\beta$. However, when calculation gets complicated, such a convention can easily be confused. In addition, the component label $n$ is omitted since we use vector notation $\mathbf{v}$ instead of $v_n$, which could also cause trouble. To make the calculations in the next few sections easier, we introduce the following labeling of SDE~(\ref{eq:inter_particle_sys}).

Firstly, the SDE~(\ref{eq:inter_particle_sys}) includes random variables ${v}^{s,i}_n$, where the species index $s\in\{\alpha, \beta\}$, particle index $i\in\{1,2,...,N_s\}$, and the component index $n\in\{1,2,3\}$. So, the dimension $d$ in Eq.~(\ref{eq:standard_SDE}) is $3(N_\alpha+N_\beta)$ and the index $i$ in Eq.~(\ref{eq:standard_SDE}) should be replaced by three indices $(s,i,n)$. 

Next, the SDE~(\ref{eq:inter_particle_sys}) has independent Wiener processes $W^{jk}_m(t)$, where the particle index $j\in\{1,2,...,N_\alpha\}$, $k\in\{1,2,...,N_\beta\}$, and the component index $m\in\{1,2,3\}$. So the number of independent Wiener processes $N$ in Eq.~(\ref{eq:standard_SDE}) is $3N_\alpha N_\beta$ and the index $j$ in Eq.~(\ref{eq:standard_SDE}) should be replaced by another three indices $(j,k,m)$. 

With the above definition, coefficient $F^{ij}$ in Eq.~(\ref{eq:standard_SDE}) is a $3(N_\alpha+N_\beta) \times 3N_\alpha N_\beta$ matrix, which should be labeled by 6 indices $(s,i,n;j,k,m)$. We can separate the two component indices $m,n \in \{1,2,3\}$ and write the coefficient as $F^{s,i;j,k}_{nm}$. So, the SDE~(\ref{eq:inter_particle_sys_1}) and (\ref{eq:inter_particle_sys_2}) can be rewritten as follows:
\begin{equation}
	\mathrm{d}{v}^{s,i}_n = e^{s,i}_n \, \mathrm{d}t 
	+ \sum_{j=1}^{N_\alpha} \sum_{k=1}^{N_\beta} \sum_{m=1}^3
	F^{s,i;j,k}_{nm} \mathrm{d}W^{jk}_m,
	\label{eq:standard_SDE_new_index}
\end{equation}
where $s\in\{\alpha, \beta\}$, $i\in\{1,2,...,N_s\}$, and $n\in\{1,2,3\}$. 

For the drag coefficients $e^{s,i}_n$, we have
\begin{subequations}
	\begin{align}
		e^{\alpha,i}_n 
		&= \sum_{j=1}^{N_\beta} 
		\left[
			\dfrac{ wL_{\alpha\beta}}{2 m_{\alpha}}\left(\dfrac{1}{m_{\alpha}}+\dfrac{1}{m_{\beta}}\right)
			b(\mathbf{u}^{ij})
		\right]_n, \\  
		e^{\beta,i}_n
		&= - \sum_{j=1}^{N_\alpha} 
		\left[ 
			\dfrac{ wL_{\alpha\beta}}{2 m_{\beta}}\left(\dfrac{1}{m_{\alpha}}+\dfrac{1}{m_{\beta}}\right)
			b(\mathbf{u}^{ji})
		\right]_n. 
	\end{align}
\end{subequations}
Notice that both $e^{\alpha,i}_n$ and $e^{\beta,i}_n$ are summations over index $j$ but have different summation limits.

For any random variable ${v}^{s,i}_n$, it only interacts with $3N_{\bar{s}\neq s}$ among all $3N_\alpha N_\beta$ Wiener processes. Therefore, the diffusion coefficient $F^{s,i;j,k}_{nm}$ is sparse and non-zero only when $\{s=\alpha, i=j\}$ or $\{s=\beta, i=k\}$. The non-zero components of $F^{s,i;j,k}_{nm}$ are
\begin{subequations}
    \label{eq:F_non_zero}
    \begin{align}
    	F^{\alpha,j;j,k}_{nm} &= 
    	\left[
    		\dfrac{\sqrt{w L_{\alpha\beta}}}{m_\alpha}
    		\sigma (\mathbf{u}^{jk})
    	\right]_{nm}, \\  
    	F^{\beta,k;j,k}_{nm} &= - 
    	\left[ 
    		\dfrac{\sqrt{w L_{\alpha\beta}}}{m_\beta}
    		\sigma (\mathbf{u}^{jk})
    	\right]_{nm}.
    \end{align}
\end{subequations}

To further simplify the notation, we will use reduced mass $m_\mathrm{r} = m_\alpha m_\beta / (m_\alpha + m_\beta)$ and define:
\begin{equation}
	b^{ij}_n \doteq \left[b(\mathbf{u}^{ji})\right]_n, \quad 
	\sigma^{ij}_{mn} \doteq \left[ \sigma (\mathbf{u}^{ij}) \right]_{nm},
\end{equation}
which will be used in the rest of the Appendix.

\subsection{Energy- and momentum-conservation \label{app:energy_momentum_conservation}}

This section proves conservation laws in SDE~(\ref{eq:inter_particle_sys}). Using indices in Eq.~(\ref{eq:standard_SDE_new_index}), the energy and momentum can be written as
\begin{subequations}
    \label{eq:energy_and_momentum_definition}
    \begin{align}
    	E &\doteq \sum_{s}^{\{\alpha, \beta\}} \,
    	\sum_{i=1}^{N_s} \sum_{n=1}^3
    	\dfrac{1}{2} w m_s (v^{s,i}_n)^2, \\  
    	P_n &\doteq 
    	\sum_{s}^{\{\alpha, \beta\}} \,
    	\sum_{i=1}^{N_s} w m_s v^{s,i}_n,
    \end{align}
\end{subequations}
where $s\in\{\alpha, \beta\}$ is the particle species and $w$ is the particle weight. The momentum conservation is straightforward since it directly results from Newton's third law. 

To evaluate the time evolution of energy, we need to use It\^{o}'s lemma, which is briefly summarized as follows. Consider a $d$-dimensional random process $X(t)=[X^1(t),...,X^d(t)]$ governed by SDE~(\ref{eq:standard_SDE}). Let $U=U(x^1, ..., x^d)$ be a smooth \yfu{scalar} function, one can define a 1-D random process $Y(t) = U[X^1(t), ..., X^d(t)]$. The It\^{o}'s lemma states that the SDE for $Y$ is \cite{Kloeden1992numerical}:
\begin{equation*}
    \begin{split}
    	\mathrm{d} Y 
    	= & \left(
    		\sum_{i=1}^d e^i \dfrac{\partial U}{\partial x^i}
    		+ \dfrac{1}{2} \sum_{j=1}^N \, \sum_{i,k=1}^d F^{ij} F^{kj} \dfrac{\partial^2 U}{\partial x^i \partial x^k}
    	\right) \mathrm{d}t \\
    	& \quad + \sum_{j=1}^N \sum_{i=1}^d F^{ij} \dfrac{\partial U}{\partial x^i} \mathrm{d}W^j.
    \end{split}    
\end{equation*}

In the case of energy conservation, we can define \yfu{scalar} function $U=E$ in Eq.~(\ref{eq:energy_and_momentum_definition}), whose derivatives are
\begin{subequations}
    \begin{align}
    	&\dfrac{\partial U}{\partial v^{s,i}_n} 
        = w m_s v^{s,i}_n, \\ 
    	&\dfrac{\partial^2 U}{\partial v^{s,i}_n \partial v^{\bar{s},\bar{i}}_{\bar{n}}} 
        = w m_s \delta_{s\bar{s}} \,\delta_{i\bar{i}} \,\delta_{n\bar{n}}.  
    \end{align}
\end{subequations}
Using the indices Eq.~(\ref{eq:standard_SDE_new_index}), the It\^{o}'s lemma tells us the SDE for kinetic energy $E$ is
\begin{subequations}
	\label{eq:ito_energy_mid_step}
    \begin{align}
    	\mathrm{d}E = &  
    	w\sum_{s,i,n} e^{s,i}_n m_s v^{s,i}_n \mathrm{d}t 
        \label{eq:ito_energy_mid_step_1}\\
    	& + \dfrac{w}{2} 
    	\sum_{j,k,m} \, \sum_{s,i,n}
    	F^{s,i;j,k}_{nm} F^{s,i;j,k}_{nm} 
    	m_s \mathrm{d}t 
        \label{eq:ito_energy_mid_step_2} \\
    	& + w\sum_{s,i,n} \, \sum_{j,k,m}
    	F^{s,i;j,k}_{nm} m_s v^{s,i}_n \mathrm{d}W^{jk}_m.
    	\label{eq:ito_energy_mid_step_3}
    \end{align}
\end{subequations}
Here, we denote the three terms in Eq.~(\ref{eq:ito_energy_mid_step_1})-(\ref{eq:ito_energy_mid_step_3}) as $T_1$, $T_2$, and $T_3$, which will be evaluated individually. 

The first term $T_1$ is 
\begin{subequations}
	\begin{align}
		& T_1 / w \doteq
		\sum_{s,i,n} e^{s,i}_n m_s v^{s,i}_n \\
		 = & 
		\sum_{i=1}^{N_\alpha} \sum_{n=1}^3  e^{\alpha,i}_n m_\alpha v^{\alpha,i}_n 
		+ \sum_{i=1}^{N_\beta} \sum_{n=1}^3  e^{\beta,i}_n m_\beta v^{\beta,i}_n \\ 
		= & 
		\sum_{i=1}^{N_\alpha}\sum_{j=1}^{N_\beta} \sum_{n=1}^3 
			\dfrac{ wL_{\alpha\beta}}{2 m_{\alpha} m_\mathrm{r}}
			m_\alpha b^{ij}_n v^{\alpha,i}_n \nonumber \\
		& - \sum_{i=1}^{N_\beta} \sum_{j=1}^{N_\alpha} \sum_{n=1}^3
			\dfrac{ wL_{\alpha\beta}}{2 m_{\beta} m_\mathrm{r}}
			m_\beta b^{ij}_n v^{\beta,i}_n \\ 
		= & \sum_{i=1}^{N_\alpha}\sum_{j=1}^{N_\beta} \sum_{n=1}^3 
			\dfrac{ wL_{\alpha\beta}}{2 m_\mathrm{r}}
			b^{ij}_n
		\left( v^{\alpha,i}_n - v^{\beta,j}_n \right) 
		\\
		= & \sum_{i,j,n}
		 \dfrac{ wL_{\alpha\beta}}{2 m_\mathrm{r}} b^{ij}_n {u}_n^{ij} 
		= - \sum_{i,j} \dfrac{wL_{\alpha\beta}}{m_\mathrm{r}}
		\big|\mathbf{u}^{ij}\big|^{2+\gamma},
	\end{align}
\end{subequations}
where we have used $b(\mathbf{u})\doteq \partial_\mathbf{u}\cdot a(\mathbf{u})= -2|\mathbf{u}|^\gamma \mathbf{u}$.

Since the components of $F$ is non-zero only for terms in Eq.~(\ref{eq:F_non_zero}), we can calculate $T_2$ in Eq.~(\ref{eq:ito_energy_mid_step_2}) as
\begin{subequations}
	\begin{align}
		& T_2 / w \doteq
		\dfrac{1}{2}
		\sum_{m,n} \sum_{j,k} \sum_{s,i} F^{s,i;j,k}_{nm} F^{s,i;j,k}_{nm}  m_s \\
		= & \dfrac{1}{2} 
			\sum_{m,n} \sum_{j,k} 
			\Big( F^{\alpha,j;j,k}_{nm} F^{\alpha,j;j,k}_{nm} m_\alpha \nonumber \\
		  & \qquad \qquad \qquad +  F^{\beta,k;j,k}_{nm} F^{\beta,k;j,k}_{nm} m_\beta \Big) \\ 
		= & \dfrac{1}{2} \sum_{m,n} \sum_{j,k}
			\Bigg( \dfrac{{w L_{\alpha\beta}}}{m_\alpha}
			\sigma^{jk}_{nm} \sigma^{jk}_{nm} \nonumber \\ 
		  & \qquad \qquad \qquad + \dfrac{{w L_{\alpha\beta}}}{m_\beta}
			\sigma^{jk}_{nm} \sigma^{jk}_{nm} \Bigg).
	\end{align}
\end{subequations}
Notice that matrix $\sigma(\mathbf{u})$ is defined in Eq.~(\ref{eq:sigma_def}), so we have
\begin{subequations}
	\begin{align}
		\sum_{m,n} [\sigma(\mathbf{u})]_{mn}[\sigma(\mathbf{u})]_{mn}
		= \mathrm{Tr}\Big\{ [\sigma(\mathbf{u})] [\sigma(\mathbf{u})]^\intercal \Big\} \\
		= \mathrm{Tr}\Big[ a(\mathbf{u}) \Big]
		= |\mathbf{u}|^{2+\gamma} \mathrm{Tr}\Big[ \Pi(\mathbf{u}) \Big]
		= 2 |\mathbf{u}|^{2+\gamma}.
	\end{align}
\end{subequations}
Therefore, $T_2$ can be written as 
\begin{align}
	T_2 / w = \sum_{j,k}
	\dfrac{w L_{\alpha\beta}}{m_\mathrm{r}} 
	\big|\mathbf{u}^{jk}\big|^{2+\gamma},
\end{align}
which means $T_1 + T_2=0$ and the deterministic term in Eq.~(\ref{eq:ito_energy_mid_step}) vanishes.  

We can calculate the third term $T_3$ in Eq.~(\ref{eq:ito_energy_mid_step_3}) as:
\begin{subequations}
	\begin{align}
		& T_3/w \doteq 
		\sum_{j,k}\sum_{m,n}\sum_{s,i}
		F^{s,i;j,k}_{nm} m_s v^{s,i}_n \mathrm{d}W^{jk}_m \\ 
		= & \sum_{j,k}\sum_{m,n}
			\Big( F^{\alpha,j;j,k}_{nm} m_\alpha v^{\alpha,j}_n \mathrm{d}W^{jk}_m \nonumber \\[-3pt]
		  & \qquad \qquad + F^{\beta,k;j,k}_{nm} m_\beta v^{\beta,k}_n \mathrm{d}W^{jk}_m \Big) \\[3pt]
		= & \sum_{j,k}\sum_{m,n}
		\sqrt{w L_{\alpha\beta}} 
		\left(
		\sigma^{jk}_{nm} v^{\alpha,j}_n 
		- 
		\sigma^{jk}_{nm} v^{\beta,k}_n
		\right)
		\mathrm{d}W^{jk}_m \\ 
		= & \sum_{j,k}\sum_{m,n}
		\sqrt{w L_{\alpha\beta}} 
		\sigma^{jk}_{nm} 
		u^{jk}_n
		\mathrm{d}W^{jk}_m.
	\end{align}
\end{subequations}
Since $\sigma(\mathbf{u})$ is proportional to projector $\Pi(\mathbf{u})$, which satisfy $\Pi(\mathbf{u})\mathbf{u}=0$, we can see the stochastic term $T_3=0$. 

In summary, with the help of It\^{o}'s lemma, we have proved that the energy satisfies $\mathrm{d}E = 0$, which means it is exactly conserved in the SDE.

\subsection{Transformation from It\^{o} to Stratonovich \label{app:ito_to_stratonovich}}

Next, we transform SDE~(\ref{eq:inter_particle_sys_1}) and (\ref{eq:inter_particle_sys_2}) into Stratonovich sense. Let $X(t)$ be a $d$-dimensional random process, the following SDEs in It\^{o} and Stratonovich sense are equivalent:
\begin{subequations}
	\begin{align}
		\text{It\^{o}:}& \quad 
		\mathrm{d}X^i = e^i \mathrm{d}t + \sum_{j=1}^N F^{ij} \mathrm{d}W^j,
		\label{eq:ito_SDE} \\ 
		\text{Stratonovich:}& \quad 
		\mathrm{d}X^i = \bar{e\,}^i \mathrm{d}t + \sum_{j=1}^N F^{ij} \circ \mathrm{d}W^j,
	\end{align}
\end{subequations}
where $i\in\{1,2,...,d\}$. Their diffusion coefficients $F$ are same, but drag coefficients $e$ and $\bar{e\,}$ are related by
\begin{equation}
	e^i - \bar{e\,}^i 
	= \dfrac{1}{2} \sum_{j=1}^d \sum_{k=1}^N F^{jk} \dfrac{\partial F^{ik}}{\partial x_j}.
	\label{eq:ito_to_stratonovich}
\end{equation}

Using the indices defined in Eq.~(\ref{eq:standard_SDE_new_index}), the transformation factor $\Delta^{s,i}_n \doteq e^{s,i}_n - \bar{e\,}^{s,i}_n$ is:
\begin{equation}
	\Delta^{s,i}_n =
	\dfrac{1}{2} \sum_{\bar{s}}^{\{\alpha,\beta\}} \sum_{\bar{i}=1}^{N_{\bar{s}}} 
	\sum_{j=1}^{N_\alpha} \sum_{k=1}^{N_\beta} \sum_{m,n=1}^3 \,
	F^{\bar{s},\bar{i};j,k}_{\bar{n} m} \dfrac{\partial F^{s,i;j,k}_{nm}}{\partial v^{\bar{s},\bar{i}}_{\bar{n}}}.
\end{equation}
At given indices $j$ and $k$, the non-zero components of $F$ is given by Eq.~(\ref{eq:F_non_zero}). So, firstly, the summation over indices $\bar{s}$ and $\bar{i}$ can be replaced by
\begin{align}
	\Delta^{s,i}_n = 
	& \dfrac{1}{2} \sum_{j,k} \sum_{m,\bar{n}}
	F^{\alpha,j;j,k}_{\bar{n} m} \dfrac{\partial F^{s,i;j,k}_{nm}}{\partial v^{\alpha,j}_{\bar{n}}} 
	\nonumber \\ 
	& + \dfrac{1}{2} \sum_{j,k} \sum_{m,\bar{n}}
	F^{\beta,k;j,k}_{\bar{n} m} \dfrac{\partial F^{s,i;j,k}_{nm}}{\partial v^{\beta,k}_{\bar{n}}}.
\end{align}
Next, consider the case where $s=\alpha$, so $F^{s,i;j,k}_{nm}$ is non-zero only when $i=j$. In this case, the summation of the index $j$ can be replaced by:
\begin{subequations}
	\begin{align}
		\Delta^{s,i}_n = 
		& \dfrac{1}{2} \sum_{k=1}^{N_\beta} \sum_{m,\bar{n}}
		F^{\alpha,i;i,k}_{\bar{n} m} \dfrac{\partial F^{\alpha,i;i,k}_{nm}}{\partial v^{\alpha,i}_{\bar{n}}} 
		\nonumber \\
		& + \dfrac{1}{2} \sum_{k=1}^{N_\beta} \sum_{m,\bar{n}}
		F^{\beta,k;i,k}_{\bar{n} m} \dfrac{\partial F^{\alpha,i;i,k}_{nm}}{\partial v^{\beta,k}_{\bar{n}}} 
		\\ 
		= & \dfrac{1}{2} 
		\sum_{k=1}^{N_\beta} \sum_{m,\bar{n}}
		\dfrac{{w L_{\alpha\beta}}}{m_\alpha}
		\Bigg( 
		\dfrac{1}{m_\alpha}
		\sigma^{ik}_{\bar{n}m}
		\dfrac{\partial \sigma^{ik}_{nm}}{\partial v^{\alpha,i}_{\bar{n}}} 
		\nonumber \\ 
		& \qquad \qquad \qquad \qquad - 
		\dfrac{1}{m_\beta}
		\sigma^{ik}_{\bar{n}m}
		\dfrac{\partial \sigma^{ik}_{nm}}{\partial v^{\beta,k}_{\bar{n}}} 
		\Bigg).
		\label{eq:ito_to_stratonovich_tmp_1}
	\end{align}
\end{subequations}

Notice that $\mathbf{u}^{ik} = \mathbf{v}^{\alpha,i} - \mathbf{v}^{\beta,k}$, so for any function $f(\mathbf{u}^{ik})$, we have
\begin{equation}
	\dfrac{\partial f(\mathbf{u}^{ik})}{\partial \mathbf{u}^{ik}}
	= \dfrac{\partial f(\mathbf{u}^{ik})}{\partial \mathbf{v}^{\alpha,i}}
	= - \dfrac{\partial f(\mathbf{u}^{ik})}{\partial \mathbf{v}^{\beta,k}}. 
\end{equation}
Thus, Eq.~(\ref{eq:ito_to_stratonovich_tmp_1}) becomes:
\begin{equation}
	\Delta^{s,i}_n
	= \dfrac{w L_{\alpha\beta}}{2 m_\alpha m_\mathrm{r}} 
	\sum_{k=1}^{N_\beta} \sum_{m,\bar{n}}
	\sigma^{ik}_{\bar{n} m} 
	\dfrac{\partial \sigma^{ik}_{nm}}{\partial u^{ik}_{\bar{n}}}.
\end{equation}
We can verify that 
\begin{equation}
	\sum_{m,\bar{n}=1}^3 \left[ {\sigma}(\mathbf{u}) \right]_{\bar{n}m}
	\dfrac{\partial \left[ {\sigma}(\mathbf{u}) \right]_{nm}}{\partial u_{\bar{n}}}
	= -2 |\mathbf{u}|^\gamma u_n
	= b(\mathbf{u}).
\end{equation}
So, finally, we get 
\begin{equation}
	\Delta^{s,i}_n
	= \dfrac{w L_{\alpha\beta}}{2 m_\alpha m_\mathrm{r}}
	\sum_{k=1}^{N_\beta} b_n^{ik}
	= e^{\alpha,i}_n, 
\end{equation}
which indicates that $\bar{e\,}^{\alpha,i}_n = 0$. Similarly, we can also show that $\bar{e\,}^{\beta,i}_n = 0$, which means that the drag term in Stratonovich SDE vanishes.

\subsection{Singularities}

Despite the distinctions with the standard SDE in Eq.~(\ref{eq:Ito_DDSDE}), it has been rigorously proved that the SDE~(\ref{eq:intra_particle_sys}) also converges to the LFP equation describing the Coulomb collision. With $N$ random variables $\mathbf{v}^i(t)$ satisfying SDE~(\ref{eq:intra_particle_sys}), one can define the Wasserstein distance $\mathcal{W}_2(\tilde{f}, f)$ between the approximated distribution $\tilde{f}(\mathbf{v}) \doteq w \sum_{i=1}^N \delta(\mathbf{v} - \mathbf{v}^i)$ and the exact distribution $f(\mathbf{v})$ from the LFP equation (\ref{eq:FP}). Roughly speaking, Ref.~\cite{fournier2017from} has proved that when $\gamma\in[0,1]$, the Wasserstein distance $\mathcal{W}_2(\tilde{f}, f) \sim \mathcal{O}(N^{-1/3})$ for any $0<t<\infty$. In other words, $\tilde{f}(\mathbf{v})$ converges to the LFP equation's solution when the number of particles $N$ is large enough. 

It is necessary to mention that the rigorous convergence theorem is only proved for $\gamma\in[0,1]$, which does not include the case of Coulomb collision with $\gamma=-3$. This is mainly due to the mathematical difficulty in treating the singularities in functions $a(\mathbf{v})$ and $b(\mathbf{v})$ at $\mathbf{v}\to 0$ when $\gamma < 0$. For example, $b(\mathbf{v}) \sim |\mathbf{v}|^\gamma \mathbf{v}$, so its derivatives at $\mathbf{v}\to0$ diverge if $\gamma<0$. On the other hand, despite the singularities, the diffusion and drag coefficients defined in Eq.~(\ref{eq:diffusion_drag}) remain finite. Thus, we may argue that the difficulties due to the singularities are mathematical instead of physical. Therefore, we can still use SDEs~(\ref{eq:intra_particle_sys_stratonovich})-(\ref{eq:inter_particle_sys_stratonovich}) to study Coulomb collision for $\gamma=-3$.

Nevertheless, although the singularities should not affect the physics of Coulomb collision, they can still cause trouble during the numerical integration of the SDEs. For example, the SDE~(\ref{eq:intra_particle_sys_stratonovich}) is written in Stratonovich sense, so one may want to solve it by the midpoint method \cite{milstein2002numerical,milstein2004stochastic}, which can be written as 
\begin{equation}
	\mathbf{v}^{i}_{k+1} - \mathbf{v}^{i}_{k}
	= \sum_{\substack{j=1 \\ j\neq i}}^{N}
	\sigma (\mathbf{u}^{ij}_{k+1/2}) \, \Delta \mathbf{W}^{ij},
	\label{eq:midpoint_intra}
\end{equation}
where $\mathbf{v}^{i}_{k} \doteq \mathbf{v}^{i}(t_{k})$, $\mathbf{v}^{i}_{k+1/2} \doteq (\mathbf{v}^{i}_{k}+\mathbf{v}^{i}_{k+1})/2$, and $\mathbf{u}^{ij}_{k+1/2}\doteq \mathbf{v}^{i}_{k+1/2} - \mathbf{v}^{j}_{k+1/2}$. Here, $\Delta \mathbf{W}^{ij}\doteq \mathbf{W}^{ij}(t_{k+1}) - \mathbf{W}^{ij}(t_{k}) \sim \mathcal{N}(0, \mathrm{I}_3 \Delta t)$ are 3-D Gaussian random variables with mean 0 and covariance matrix $\mathrm{I}_3 \Delta t$. However, such an algorithm may not be practically useful. The main difficulty is due to the implicitness of the midpoint method, which can only be solved using iterative methods such as the fixed-point iteration. On the other hand, the singularity of $\sigma(\mathbf{u})$ means it could be extremely large when $\mathbf{u}$ is small, which can cause the failure of iterative methods. 

One numerical example is presented in Fig.~\ref{fig:failure_of_midpoint} to show the temperature isotropization within the same species. SDE~(\ref{eq:intra_particle_sys_stratonovich}) is calculated using the midpoint method~(\ref{eq:midpoint_intra}) with a fixed point iteration, where the initial guess of $\mathbf{v}_{k+1}$ in each step is chosen to be the result of the Euler-Maruyama method of Eq.~(\ref{eq:intra_particle_sys}). The number of particles is $N=128$ and the time step is chosen to be $\Delta t = 10^{-4} \tau_\mathrm{iso,0}$, which is already extremely small. 

\begin{figure}[ht]
	\centering 
	\includegraphics[width=0.95\columnwidth]{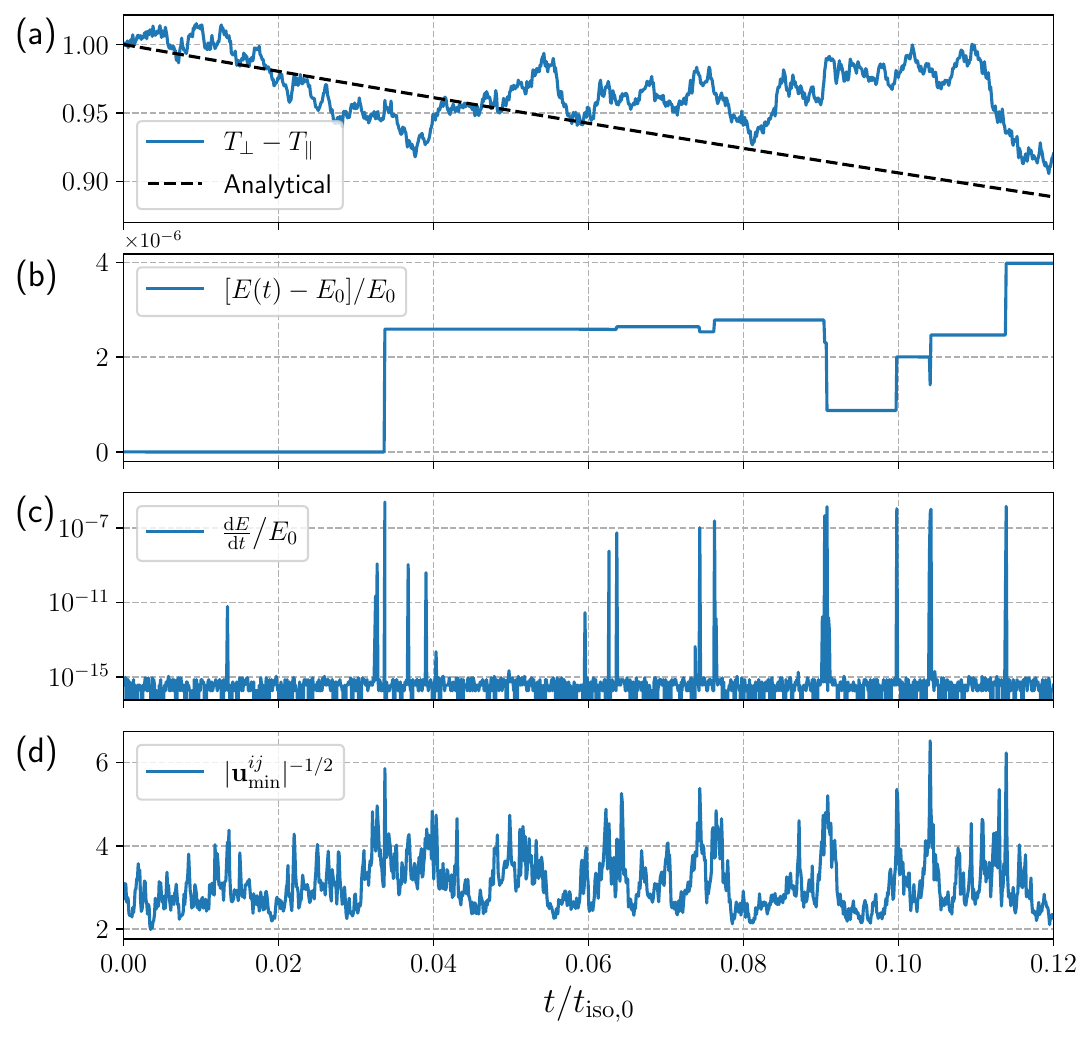} 
	\caption{Demonstration of the unconvergence in the midpoint method. (a) The temperature difference in one sample path compared to the analytical solution. (b) The evolution of total energy. (c) The time derivative of total energy. (d) $|\mathbf{u}_\mathrm{min}^{ij}|^{-1/2}$ where $\mathbf{u}_\mathrm{min}^{ij}$ is the minimum value of relative velocities in each time.}
	\label{fig:failure_of_midpoint}
\end{figure}

The temperature difference in one sample path is presented in Fig.~\ref{fig:failure_of_midpoint}(a) while the total energy of the system is shown in Fig.~\ref{fig:failure_of_midpoint}(b). We can see that the total energy remains constant most of the time but occasionally jumps by a finite amount. This is because the fixed-point iteration fails to converge at some time step, which is also observed on other non-SDE-based implicit numerical integrators, such as in Ref.~\cite{zonta2022multispecies}. Indeed, the derivative of total energy, shown in Fig.~\ref{fig:failure_of_midpoint}(c), is almost at machine precision most of the time except for several abrupt jumps. Because the divergent term in $\sigma(\mathbf{u})$ is proportional to $|\mathbf{u}|^{-1/2}$, we calculate the minimum value of relative velocities $\mathbf{u}_\mathrm{min}^{ij}$ in each time step and plot $|\mathbf{u}_\mathrm{min}^{ij}|^{-1/2}$ in Fig.~\ref{fig:failure_of_midpoint}(d). We can see that \yfu{there is a strong correlation} between the non-zero derivatives of energy and the peaks of $|\mathbf{u}_\mathrm{min}^{ij}|^{-1/2}$, though the correspondence is not one-to-one. Thus, we believe that the singularities in $\sigma(\mathbf{u})$ are the root cause of the unconvergence of the iterative methods. Larger time steps, such as $\Delta t = 10^{-2} \tau_\mathrm{iso,0}$ used in Fig.~\ref{fig:relaxations}(a) in the main article, will make the unconvergence issue much worse.

On the other hand, the modified midpoint method we derived in Eq.~(\ref{eq:modified_midpoint_intra}) is explicitly solvable, which avoids the convergence issue in iterative methods. However, it still suffers from the singularity because when $\sigma(\mathbf{u})$ is large, one still needs small time steps to ensure numerical accuracy in the modified midpoint method. Such an issue has been discussed in section~\ref{sec:complexity_reduction}
.

\section{Properties of the numerical method}

This section proves various properties of the modified midpoint method both analytically and numerically. Similar to the previous Appendix, we only demonstrate the properties of inter-species collision. We will first prove this method conserves energy and momentum exactly and then construct its explicit solution. Then, we will prove such a numerical method indeed converges to the SDE we're interested in. Finally, we will show the evolution of the distribution functions in our numerical examples.

\subsection{Energy- and momentum-conservation \label{app:numerical_conservation}}

This section proves the conservation law in the numerical method. The change of momentum in each step is
\begin{subequations}
	\begin{align}
		& \Delta \mathbf{P}_{k+1} \doteq \sum_{s}^{\{\alpha,\beta\}} \sum_{i=1}^{N_s}
		w m_s \Delta \mathbf{v}^{s,i}_{k+1} \\ 
		\sim & \sum_{i,j}
		\left(
			\boldsymbol{\Omega}^{ij}_k \times \mathbf{u}_{k+\frac{1}{2}}^{ij} 
			- \boldsymbol{\Omega}^{ij}_k \times \mathbf{u}_{k+\frac{1}{2}}^{ij}
		\right)
		= 0.
		\label{eq:numerical_momentum_conservation}
	\end{align}
\end{subequations}
We can see, again, that momentum conservation is the direct result of Newton's third law. 

The change of energy in each step is 
\begin{subequations}
	\begin{align}
		& \Delta E_k \doteq \sum_{s}^{\{\alpha,\beta\}} \sum_{i=1}^{N_s}
		\dfrac{1}{2} w m_s \left[(\mathbf{v}^{s,i}_{k+1})^2 - (\mathbf{v}^{s,i}_k)^2\right] \\ 
		= & \sum_{s}^{\{\alpha,\beta\}} \sum_{i=1}^{N_s} w m_s 
		\mathbf{v}^{s,i}_{k+\frac{1}{2}} \cdot \Delta \mathbf{v}^{s,i}_{k+1} \\ 
		\sim & \sum_{i,j}
		\Bigg[
			\mathbf{v}^{\alpha,i}_{k+\frac{1}{2}} \cdot
			\left(
				\boldsymbol{\Omega}^{ij}_k \times \mathbf{u}_{k+\frac{1}{2}}^{ij}
			\right) 
			\nonumber \\[-2pt]
			& \qquad \quad - \mathbf{v}^{\beta,j}_{k+\frac{1}{2}} \cdot
			\left(
				\boldsymbol{\Omega}^{ij}_k \times \mathbf{u}_{k+\frac{1}{2}}^{ij}
			\right)
		\Bigg] \\ 
		\sim & \sum_{i,j}
		\mathbf{u}_{k+\frac{1}{2}}^{ij}
		\cdot
		\left(
			\boldsymbol{\Omega}^{ij}_k \times \mathbf{u}_{k+\frac{1}{2}}^{ij}
		\right) = 0.
		\label{eq:numerical_energy_conservation}
	\end{align}
\end{subequations}
Therefore, the modified midpoint method conserves energy and momentum exactly in each time step.

\subsection{Explicit solution \label{app:explicit_solution_to_modified_midpoint}}

This section explicitly solves the velocities at the $k+1$ step in the modified midpoint scheme. For any constant vector $\mathbf{A}=(A_1,A_2,A_3)^\intercal$ , we can define a $3\times 3$ matrix $\hat{\mathbf{A}}$ as 
\begin{equation}
	\hat{\mathbf{A}} \doteq 
	\begin{pmatrix}
		0 & -A_3 & A_2 \\ 
		A_3 & 0 & -A_1 \\ 
		-A_2 & A_1 & 0
	\end{pmatrix}.
	\label{eq:hat_map}
\end{equation}
For another vector $\mathbf{B}=(B_1,B_2,B_3)^\intercal$, we have
\begin{equation}
	\mathbf{A}\times \mathbf{B} = \hat{\mathbf{A}} \mathbf{B}.
\end{equation}

Thus, we can rewrite Eq.~(\ref{eq:modified_midpoint_inter}) as 
\begin{widetext}
	\begin{align}
		\left( 
			\mathrm{I}_3 
			- \dfrac{\sqrt{w L_{\alpha\beta}}}{2 m_\alpha} \sum_{j=1}^{N_\beta} \hat{\boldsymbol{\Omega}}^{ij}_k 
		\right) 
		\mathbf{v}^{\alpha, i}_{k+1} 
		+ \dfrac{\sqrt{w L_{\alpha\beta}}}{2 m_\alpha} \sum_{j=1}^{N_\beta} \hat{\boldsymbol{\Omega}}^{ij}_k \mathbf{v}^{\beta,j}_{k+1} 
		= &
		\left( \mathrm{I}_3 + \dfrac{\sqrt{w L_{\alpha\beta}}}{2 m_\alpha} \sum_{j=1}^{N_\beta} \hat{\boldsymbol{\Omega}}^{ij}_k\right) 
		\mathbf{v}^{\alpha,i}_{k}
		- \dfrac{\sqrt{w L_{\alpha\beta}}}{2 m_\alpha} \sum_{j=1}^{N_\beta} \hat{\boldsymbol{\Omega}}^{ij}_k \mathbf{v}^{\beta,j}_{k},
		\label{eq:tmp_explicit_solve1}\\ 
		\left( \mathrm{I}_3 - \dfrac{\sqrt{w L_{\alpha\beta}}}{2 m_\beta}\, \sum_{i=1}^{N_\alpha} \hat{\boldsymbol{\Omega}}^{ij}_k \right)
		\mathbf{v}^{\beta, j}_{k+1} 
		+ \dfrac{\sqrt{w L_{\alpha\beta}}}{2 m_\beta}\, \sum_{i=1}^{N_\alpha} \hat{\boldsymbol{\Omega}}^{ij}_k \mathbf{v}^{\alpha,i}_{k+1}
		= &
		\left( \mathrm{I}_3 + \dfrac{\sqrt{w L_{\alpha\beta}}}{2 m_\beta}\, \sum_{i=1}^{N_\alpha} \hat{\boldsymbol{\Omega}}^{ij}_k\right) 
		\mathbf{v}^{\beta,j}_{k}
		- \dfrac{\sqrt{w L_{\alpha\beta}}}{2 m_\beta}\, \sum_{i=1}^{N_\alpha} \hat{\boldsymbol{\Omega}}^{ij}_k \mathbf{v}^{\alpha,i}_{k}.
		\label{eq:tmp_explicit_solve2}
	\end{align}
\end{widetext}

Let $\mathbf{v}_k^s = (\mathbf{v}_k^{s,1},\mathbf{v}_k^{s,2},...,\mathbf{v}_k^{s,N})^\intercal$ and $\mathbf{v}_{k+1}^s = (\mathbf{v}_{k+1}^{s,1},\mathbf{v}_{k+1}^{s,2},...,\mathbf{v}_{k+1}^{s,N})^\intercal$ with $s\in\{\alpha,\beta\}$. Define $N_\alpha \times N_\alpha$ diagonal matrix $\mathbf{M}_1=M_1^{ij}$ and $N_\alpha \times N_\beta$ matrix $\mathbf{M}_2=M_2^{ij}$ as:
\begin{subequations}
	\begin{align}
		M_1^{ii} & \doteq - \dfrac{\sqrt{w L_{\alpha\beta}}}{2 m_\alpha}  \sum_{j=1}^{N_\beta} \hat{\boldsymbol{\Omega}}^{ij}_k, 
		\quad i\in\{1,2,...,N_\alpha \};  \\ 
		M_2^{ij} & \doteq \dfrac{\sqrt{w L_{\alpha\beta}}}{2 m_\alpha} \hat{\boldsymbol{\Omega}}^{ij}_k, 
		\quad i\neq j.
	\end{align}
\end{subequations}
All other elements of $\mathbf{M}_1$ and $\mathbf{M}_2$ are zero. Similarly, we can define $N_\beta \times N_\beta$ diagonal matrix $\mathbf{N}_1=N_1^{ij}$ and $N_\alpha \times N_\beta$ matrix $\mathbf{N}_2=N_2^{ij}$ as:
\begin{subequations}
	\begin{align}
		N_1^{jj} & \doteq - \dfrac{\sqrt{w L_{\alpha\beta}}}{2 m_\beta} \sum_{i=1}^{N_\alpha} \hat{\boldsymbol{\Omega}}^{ij}_k, 
		\quad j\in\{1,2,...,N_\beta \};  \\ 
		N_2^{ij} & \doteq \dfrac{\sqrt{w L_{\alpha\beta}}}{2 m_\beta} \hat{\boldsymbol{\Omega}}^{ij}_k,
		\quad i\neq j.
	\end{align}
\end{subequations}
We can rewrite Eq.~(\ref{eq:tmp_explicit_solve1}), (\ref{eq:tmp_explicit_solve2}) as 
\begin{align*}
	(\mathrm{I} + \mathbf{M}_1) \, \mathbf{v}^\alpha_{k+1} + \mathbf{M}_2 \, \mathbf{v}^\beta_{k+1}
	&= (\mathrm{I} - \mathbf{M}_1) \, \mathbf{v}^\alpha_{k} - \mathbf{M}_2 \, \mathbf{v}^\beta_{k},\\ 
	(\mathrm{I} + \mathbf{N}_1) \, \mathbf{v}^\beta_{k+1}  + \mathbf{N}_2^\intercal\,  \mathbf{v}^\alpha_{k+1}
	&= (\mathrm{I} - \mathbf{N}_1) \, \mathbf{v}^\beta_{k}  - \mathbf{N}_2^\intercal\,  \mathbf{v}^\alpha_{k}.
\end{align*}
Finally, we can define matrix $\mathbf{Q}$ as 
\begin{equation}
	\mathbf{Q} \doteq 
	\begin{pmatrix}
		\mathbf{M}_1 & \mathbf{M}_2 \\
		\mathbf{N}_2^\intercal & \mathbf{N}_1
	\end{pmatrix}.
\end{equation}
So, $\mathbf{v}^\alpha_{k+1}$ and $\mathbf{v}^\beta_{k+1}$ can be solved as 
\begin{equation}
	\begin{pmatrix}\mathbf{v}^\alpha_{k+1}\\[3pt] \mathbf{v}^\beta_{k+1}\end{pmatrix} 
	= (\mathrm{I} + \mathbf{Q})^{-1} (\mathrm{I} - \mathbf{Q})
	\begin{pmatrix}\mathbf{v}^\alpha_{k}\\[3pt] \mathbf{v}^\beta_{k}\end{pmatrix}.
\end{equation}
Here, $(\mathrm{I} + \mathbf{Q})^{-1} (\mathrm{I} - \mathbf{Q})$  is the Cayley transform of  $\bf{Q}$.  When $m_\alpha=m_\beta$, $\mathbf{Q}$ is antisymmetric, i.e., $\mathbf{Q}^{\intercal}=-\mathbf{Q}$, and its Cayley transform  is a rotation \cite{zhang2020simulating,fu2022explicitly,Qin2013Boris}.

\subsection{Analytical proof of convergence \label{app:convergence_of_modified_midpoint}}

This section proves analytically that the modified midpoint methods (\ref{eq:modified_midpoint_intra}), (\ref{eq:modified_midpoint_inter}), and (\ref{eq:modified_midpoint_inter}) converge to SDEs~(\ref{eq:intra_particle_sys_stratonovich})-(\ref{eq:inter_particle_sys_stratonovich}). The proof is based on the fundamental theorem of convergence for numerical approximation for SDE discussed in Ref.~\cite{milstein2004stochastic}. However, unfortunately, the fundamental theorem requires the coefficients in SDE to be smooth enough (globally Lipschitz), which is not satisfied by our SDEs (\ref{eq:intra_particle_sys_stratonovich})-(\ref{eq:inter_particle_sys_stratonovich}) when $\gamma<0$. Therefore, our proof is rigorous when $\gamma\geq 0$, but can only be treated as a heuristic argument when $\gamma<0$. Here, again, we only present the proof for the inter-species collision.

\subsubsection{Fundamental theorem and proof strategy}

Here, we briefly repeat the fundamental theorem of convergence. Firstly, we define two types of error between stochastic processes: the strong error and the weak error. Let $X_t$ and $\bar{X}_t$ be two random processes, $\langle\cdot\rangle$ be the expectation, we have
\begin{subequations}
	\begin{align}
		\text{Strong error:}\quad &
		\epsilon_\mathrm{s} \doteq 
		\left\langle|X_t - \bar{X}_t|^2\right\rangle^{1/2},
		\\ 
		\text{Weak error:} \quad &
		\epsilon_\mathrm{w} \doteq 
		|\langle X_t - \bar{X}_t \rangle|.
	\end{align}
\end{subequations}
Intuitively, the strong error is the expectation of the absolute difference, while the weak error is the absolute difference of two expectations. 

Let $X_t$ be the exact solution to an SDE, while $\bar{X}_t$ is a numerical approximation. The fundamental theorem describes the relationship between the local error of $\bar{X}_t$ in one step and the global convergence of $\bar{X}_t$. Let $\bar{X}_t(t+\Delta t)$ be a one-step approximation with time step $\Delta t$ and exact initial condition $\bar{X}_t(t) = X_t(t)$, the fundamental theorem adapted from \cite{milstein2004stochastic} is:
\begin{theorem}
Suppose the weak and strong error between the one-step approximation $\bar{X}_t(t+\Delta t)$ and the exact solution $X_t(t+\Delta t)$ are of order $p_1$ and $p_2$, i.e., $\epsilon_\mathrm{w} \sim \mathcal{O}(\Delta t^{p_1})$ and $\epsilon_\mathrm{s} \sim \mathcal{O}(\Delta t^{p_2})$. Also, let $p_2 \geq \frac{1}{2}$, $p_1 \geq p_2 + \frac{1}{2}$. Then, for all $t_k \in [0, T]$, the approximation $\bar{X}_t(t_k)$ has a strong error of order $p_2 - \frac{1}{2}$ compared to the exact solution $X_t(t_k)$. 
\end{theorem}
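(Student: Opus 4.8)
The plan is to follow the classical accumulation-of-local-errors argument of Ref.~\cite{milstein2004stochastic}, reducing the global convergence statement to the two one-step hypotheses. Denote by $X_{t,x}(s)$ the exact solution of the SDE started from state $x$ at time $t$, and let $\bar{X}_k$ be the numerical trajectory obtained by iterating the one-step map, so that $\bar{X}_{k+1}$ is the one-step approximation launched from $\bar{X}_k$ at $t_k$. The first step is to write the global error at the final time $T$ as a telescoping sum over the $N=T/\Delta t$ steps, in which the error committed at step $k$ is transported to the endpoint by the \emph{exact} flow. Concretely, I would insert and subtract the exact solutions started from the intermediate numerical states $\bar{X}_k$, using the flow (Markov) property of the exact solution, so that the $k$-th summand measures the discrepancy between one exact step and one numerical step, both issued from the same point $\bar{X}_k$.

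Second, I would estimate the mean-square of this global error by splitting each one-step discrepancy $R_k$ into its conditional mean $\langle R_k\mid \mathcal{F}_{t_k}\rangle$ (controlled by the weak-error hypothesis, order $p_1$) and its fluctuation $R_k-\langle R_k\mid \mathcal{F}_{t_k}\rangle$ about that mean (controlled by the strong-error hypothesis, order $p_2$). Taking expectations and using the tower property, the centered fluctuations from distinct steps are conditionally uncorrelated, so their mean-squares add, whereas the conditional-mean contributions add coherently. This is exactly the mechanism that produces the loss of half an order: $N$ fluctuation terms each of mean-square size $\Delta t^{2p_2}$ accumulate to $N\,\Delta t^{2p_2}=\Delta t^{2p_2-1}$, i.e.\ a strong error $\sim\Delta t^{p_2-1/2}$, while $N$ coherent mean terms of size $\Delta t^{p_1}$ accumulate to $N\,\Delta t^{p_1}=\Delta t^{p_1-1}$.

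Third, to turn these per-step bounds into a closed estimate on $\langle|\rho_k|^2\rangle$ with $\rho_k\doteq X_{t_0,x_0}(t_k)-\bar{X}_k$, I would use the Lipschitz dependence of the exact flow $X_{t,x}$ on its initial datum $x$ (mean-square contractive up to $\mathcal{O}(\Delta t)$) to bound the transport of the already-accumulated error, and then close the recursion with a discrete Gronwall inequality. Combining the two accumulation rates gives a global strong error of order $\min(p_1-1,\,p_2-\tfrac12)$; the hypothesis $p_1\geq p_2+\tfrac12$ forces $p_1-1\geq p_2-\tfrac12$, so the mean-square channel dominates and the global order is $p_2-\tfrac12$, as claimed, while $p_2\geq\tfrac12$ guarantees this exponent is nonnegative so that the scheme actually converges.

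The main obstacle is the rigorous justification of the decorrelation and flow-regularity steps: one must verify that the exact flow depends smoothly enough on its initial condition (global Lipschitz and linear-growth control of the coefficients) for both the Gronwall closure and the moment estimates of the one-step errors to be legitimate. This is precisely where $\gamma<0$ breaks the argument, since $\sigma$ and $b$ are singular at $\mathbf{u}\to 0$ and violate the global Lipschitz requirement; for $\gamma\geq 0$ the coefficients are regular enough and the argument is rigorous, whereas for the Coulomb case $\gamma=-3$ it remains only heuristic, consistent with the caveat stated above.
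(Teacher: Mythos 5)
Your outline is essentially correct, but note a point of comparison first: the paper does not prove this theorem at all — it is quoted, in adapted notation, from Milstein and Tretyakov \cite{milstein2004stochastic}, and the paper's appendix only \emph{applies} it (comparing the modified midpoint scheme against Euler--Maruyama to obtain the one-step orders $p_1=3/2$, $p_2=1$, then invoking the theorem to conclude global strong order $1/2$). What you have reconstructed is the standard proof of the cited result, and your mechanism is the right one: per-step splitting into conditional mean and centered fluctuation, the fluctuation channel accumulating as $N\,\Delta t^{2p_2}=\Delta t^{2p_2-1}$ and the coherent channel as $N\,\Delta t^{p_1}=\Delta t^{p_1-1}$, closure by Lipschitz dependence of the flow on initial data plus discrete Gronwall, with $p_1\geq p_2+\tfrac12$ guaranteeing that the mean-square channel sets the global order $p_2-\tfrac12$. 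Two refinements are needed for full rigor. First, your claim that the centered fluctuations from distinct steps are ``conditionally uncorrelated'' after transport to the endpoint by the exact flow is only heuristic: the flow map is nonlinear and depends on the post-$t_k$ noise, so exact orthogonality is destroyed; the clean argument (which your third paragraph in effect switches to) never telescopes to the final time but instead runs the one-step recursion
\begin{equation*}
\langle|\rho_{k+1}|^2\rangle \leq (1+C\Delta t)\,\langle|\rho_k|^2\rangle
+ C\,\langle|\rho_k|^2\rangle^{1/2}\left(\Delta t^{p_1} + \Delta t^{p_2+\frac12}\right)
+ C\,\Delta t^{2p_2},
\end{equation*}
in which conditioning at $t_k$ alone suffices and the cross terms are exactly where $p_1\geq p_2+\tfrac12$ is consumed. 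Second, the unconditional error orders as stated in the theorem are not quite enough to drive that recursion: one needs the one-step bounds in conditional form, uniform in the launch point up to polynomial growth, e.g.\ $|\langle R_k\mid\mathcal{F}_{t_k}\rangle|\leq K\,(1+|\bar{X}_k|^2)^{1/2}\,\Delta t^{p_1}$ and the analogous mean-square bound — this is how the hypotheses are actually formulated in \cite{milstein2004stochastic}, and your appeal to conditional means implicitly assumes it. Your closing observation about where the argument loses rigor — the global Lipschitz and growth conditions fail for the singular coefficients when $\gamma<0$, including the Coulomb case $\gamma=-3$ — matches the paper's own caveat precisely.
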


In order to utilize the fundamental theorem, we can compare our modified midpoint method~(\ref{eq:modified_midpoint_inter}), (\ref{eq:modified_midpoint_inter}) with other methods with known convergence rate, which we choose the Euler-Maruyama method for the It\^{o} SDE~(\ref{eq:inter_particle_sys_1}), (\ref{eq:inter_particle_sys_2}). The Euler-Maruyama method can be written as:
\begin{widetext}
	\begin{subequations}
		\label{eq:euler_maruyama}
		\begin{align}
			\tilde{\mathbf{v}}^{\alpha,i}_{k+1} - \mathbf{v}^{\alpha,i}_k
			= &
			\sum_{j=1}^{N_\beta}
			\left[
				\dfrac{ wL_{\alpha\beta}}{2 m_{\alpha} m_\mathrm{r}} 
				b(\mathbf{u}^{ij}_k) \Delta t
				+ \dfrac{\sqrt{w L_{\alpha\beta}}}{m_\alpha} 
				\boldsymbol{\Omega}^{ij}_k \times \mathbf{u}_{k}^{ij} 
			\right],
			\label{eq:euler_maruyama_1} \\ 
			\tilde{\mathbf{v}}^{\beta,j}_{k+1} - \mathbf{v}^{\beta,j}_{k}
			= & - \sum_{i=1}^{N_\alpha}
			\left[ 
				\dfrac{ wL_{\alpha\beta}}{2 m_{\beta} m_\mathrm{r}}
				b(\mathbf{u}^{ij}_k) \Delta t
				+ \dfrac{\sqrt{w L_{\alpha\beta}}}{m_\beta} 
				\boldsymbol{\Omega}^{ij}_k \times \mathbf{u}_{k}^{ij}
			\right]. 
			\label{eq:euler_maruyama_2}
		\end{align}
	\end{subequations}
\end{widetext}
To distinguish the two numerical methods, we use $\tilde{\mathbf{v}}_{k+1}$ for Euler-Maruyama and $\mathbf{v}_{k+1}$ for the modified midpoint. Notice that in each step, $\mathbf{v}_k$ is treated as given constants while $\Delta \mathbf{W}$, $\tilde{\mathbf{v}}_{k+1}$, and $\mathbf{v}_{k+1}$ are random variables. 

The Euler-Maruyama method~(\ref{eq:euler_maruyama}) is known to have one-step weak and strong errors of order $3/2$ and $1$. By using the following triangular inequalities:
\begin{subequations}
	\label{eq:triangular_inequality_1}
	\begin{align}
	|X_t - \bar{X}_t| &\leq |X_t - \tilde{X}_t| + |\tilde{X}_t - \bar{X}_t|, \\
	|X_t - \bar{X}_t|^2 &\leq 2 |X_t - \tilde{X}_t|^2 + 2 |\tilde{X}_t - \bar{X}_t|^2,
	\end{align}
\end{subequations}
we only need to prove weak and strong errors between $\tilde{\mathbf{v}}_{k+1}$ and $\mathbf{v}_{k+1}$ are of order $3/2$ and $1$ in each step. 

Before moving on to the proof, we also need to mention the following inequality and identity. Let $\mathbf{X}$ and $\mathbf{Y}$ be two 3-D vectors, so we have the inequality
\begin{equation}
	|\mathbf{X} \cdot \mathbf{Y}| \leq |\mathbf{X}| \cdot |\mathbf{Y}|,
	\qquad 
	|\mathbf{X} \times \mathbf{Y}| \leq |\mathbf{X}| \cdot |\mathbf{Y}|. 
	\label{eq:triangular_inequality_2}
\end{equation}
For a 1-D Gaussian random variable $\Delta W \sim \mathcal{N}(0,\Delta t)$, we have the following expectations:
\begin{subequations}
	\label{eq:expectation_of_gaussian_varialbes}
	\begin{align}
		\langle |\Delta W| \rangle &= \sqrt{\dfrac{2 \Delta t}{\pi}}, \\
		\langle \Delta W^2 \rangle &= \Delta t, \\
		\langle  |\Delta W|^3 \rangle &= \sqrt{\dfrac{8 \Delta t^3}{\pi}}.
	\end{align}
\end{subequations}
In addition, for a 3-D Gaussian random variable $\Delta \mathbf{W} \sim \mathcal{N}(0, \mathrm{I}_3 \Delta t)$, we have:
\begin{subequations}
	\label{eq:gaussian_identities}
	\begin{align}
		\langle\Delta \mathbf{W}^2 \rangle &= \sum_{m=1}^3 \langle\Delta W_m^2\rangle = 3 \Delta t, \\
		\langle\Delta W_m \Delta W_n\rangle &= \Delta t \, \delta_{mn}, \quad m,n\in \{1,2,3\}.
	\end{align}
\end{subequations}

\subsubsection{One-step strong error}

This section calculates the one-step difference between the Euler-Maruyama method and the modified midpoint method. The strong error is relatively simple and is estimated here. The weak error is more complicated and will be estimated in the next section. Assume at step $k$, the values of all $\mathbf{v}^{s,i}_k$ are the same for the two methods. Then from Eq.~(\ref{eq:modified_midpoint_inter}) and Eq.~(\ref{eq:euler_maruyama_1}), the difference in the next step $\boldsymbol{\epsilon}^{\alpha,i}\doteq {\mathbf{v}}_{k+1}^{\alpha,i} - \tilde{\mathbf{v}}_{k+1}^{\alpha,i}$ is given by
\begin{subequations}
	\label{eq:one_step_diff_1}
	\begin{align}
		\boldsymbol{\epsilon}^{\alpha,i}
		= & \sum_{j=1}^{N_\beta}
		\Bigg[
		\dfrac{\sqrt{w L_{\alpha\beta}}}{m_\alpha} 
		\boldsymbol{\Omega}^{ij}_k
		\times 
		\left( \mathbf{u}_{k+\frac{1}{2}}^{ij} - \mathbf{u}_{k}^{ij} \right) \\ 
		& \qquad \quad - \dfrac{ wL_{\alpha\beta}}{2 m_{\alpha} m_\mathrm{r}}
		b(\mathbf{u}^{ij}_k) \Delta t
		\Bigg]. 
	\end{align}
\end{subequations}
Making use of Eq.~(\ref{eq:modified_midpoint_inter}), we get
\begin{subequations}
	\label{eq:delta_u}
	\begin{align}
		& \mathbf{u}_{k+1/2}^{ij} - \mathbf{u}_{k}^{ij} 
		= \dfrac{1}{2}\left( \mathbf{u}_{k+1}^{ij} - \mathbf{u}_{k}^{ij} \right) 
		\\
		= & \dfrac{1}{2}
		\left[ 
			(\mathbf{v}^{\alpha,i}_{k+1} - \mathbf{v}^{\alpha,i}_k) - (\mathbf{v}^{\beta,j}_{k+1} - \mathbf{v}^{\beta,j}_k) 
		\right]
		\\ 
		= & 
		\dfrac{\sqrt{w L_{\alpha\beta}}}{2}
		\Bigg(
		\dfrac{1}{m_\alpha}
		\sum_{l=1}^{N_\beta}
		\boldsymbol{\Omega}^{il}_k \times \mathbf{u}_{k+\frac{1}{2}}^{il} 
		\\ & \qquad \qquad \quad +
		\dfrac{1}{m_\beta} 
		\sum_{l=1}^{N_\alpha}
		\boldsymbol{\Omega}^{lj}_k \times \mathbf{u}_{k+\frac{1}{2}}^{lj}
		\Bigg).
	\end{align}
\end{subequations}

Using the triangular inequalities (\ref{eq:triangular_inequality_1}) and (\ref{eq:triangular_inequality_2}), we can estimate the one-step strong error based on the above two equations as
\begin{widetext}
	\begin{subequations}
		\label{eq:strong_error_estimate}
		\begin{align}
			|\boldsymbol{\epsilon}^{\alpha,i}|^2 \leq
			&
			\sum_{j=1}^{N_\beta} \left|\dfrac{ wL_{\alpha\beta}}{2 m_{\alpha} m_\mathrm{r}}
			b(\mathbf{u}^{ij}_k) \Delta t\right|^2 
			+ \dfrac{{w L_{\alpha\beta}}}{2 m_\alpha^2} \sum_{j=1}^{N_\beta}  \sum_{l=1}^{N_\beta}
			|\mathbf{u}^{ij}_k|^{\gamma} \cdot |\Delta \mathbf{W}^{ij}|^2 \cdot |\mathbf{u}^{il}_k|^{\gamma} 
			\cdot |\Delta \mathbf{W}^{il}|^2 \cdot |\mathbf{u}_{k+1/2}^{il}|^2 
			\\ 
			& + \dfrac{{w L_{\alpha\beta}}}{2 m_\alpha m_\beta} \sum_{j=1}^{N_\beta} \sum_{l=1}^{N_\alpha}
			|\mathbf{u}^{ij}_k|^{\gamma} \cdot |\Delta \mathbf{W}^{ij}|^2 \cdot |\mathbf{u}^{lj}_k|^{\gamma}
			\cdot |\Delta \mathbf{W}^{lj}|^2 \cdot |\mathbf{u}_{k+1/2}^{lj}|^2.
		\end{align}
	\end{subequations}
\end{widetext}
The term $\mathbf{u}^{ij}_{k+1/2}$ is a random variable since it involves $\mathbf{v}^{s,i}_{k+1}$. Thanks to the exact energy conservation of the modified midpoint methods, their magnitudes are always bounded by the initial energy $E_0$ of the system, i.e., $|\mathbf{v}^{s,i}_{k+1}|^2 \leq C E_0$, where $C$ is some constant related to particle mass and weight. We can show $|\mathbf{u}^{ij}_{k+1/2}|$ is bounded:
\begin{subequations}
	\label{eq:boundedness_of_uij}
	\begin{align}
		& |\mathbf{u}^{ij}_{k+1/2}|^2 = \dfrac{1}{4}
		\left|
			\mathbf{v}^{\alpha,i}_{k+1} - \mathbf{v}^{\beta,i}_{k+1} 
			+ \mathbf{v}^{\alpha,i}_{k} - \mathbf{v}^{\beta,i}_{k} 
		\right|^2 \\
		& \leq
			|\mathbf{v}^{\alpha,i}_{k+1}|^2 + |\mathbf{v}^{\beta,i}_{k+1}|^2
			+ |\mathbf{v}^{\alpha,i}_{k}|^2 + |\mathbf{v}^{\beta,i}_{k}|^2 \\ 
		& \leq 4 E_0. 
	\end{align}
\end{subequations}
So, with the boundedness of $|\mathbf{u}^{ij}_{k+1/2}|^2$, the only random variables in Eq.~(\ref{eq:strong_error_estimate}) are the Gaussian random variables $\Delta \mathbf{W}^{ij}$. Using Eq.~(\ref{eq:expectation_of_gaussian_varialbes}), we find that the expectation of the RHS in Eq.~(\ref{eq:strong_error_estimate}) are all of order $\mathcal{O}(\Delta t^2)$, which means the strong error $\sqrt{\langle|\boldsymbol{\epsilon}^{\alpha,i}|^2\rangle}$ is at most $\mathcal{O}(\Delta t)$. The strong error of $\boldsymbol{\epsilon}^{\beta,j}$ can be estimated in the same way. Therefore, the one-step strong error of the modified midpoint method is at least order 1.

\subsubsection{One-step weak error}

To prove that the weak error $|\langle\boldsymbol{\epsilon}^{\alpha,i}\rangle|$ is at most $\mathcal{O}(\Delta t^{3/2})$, we need a more careful estimation of Eq.~(\ref{eq:one_step_diff_1}) and (\ref{eq:delta_u}). In particular, the drag term in Eq.~(\ref{eq:one_step_diff_1}) is $\mathcal{O}(\Delta t)$. So, we need to prove that the drag terms are somehow canceled out by the diffusion term. The strategy is to expand the $\mathbf{u}^{ij}_{k+{1}/{2}}$-terms on the RHS of Eq.~(\ref{eq:delta_u}) once again and calculate its leading correction. As we will see later, the leading correction cancels out the drag term exactly.

Firstly, due to the boundedness of $\mathbf{u}_{k+1/2}^{ij}$, we can use Eq.~(\ref{eq:delta_u}) to get an estimation:
\begin{equation}
	\mathbf{u}_{k+1/2}^{ij} - \mathbf{u}_{k}^{ij} \sim  \mathcal{O}(\sqrt{\Delta t}\,).
\end{equation}
Plug this estimation into the RHS of Eq.~(\ref{eq:delta_u}), we get the next order estimation:
\begin{subequations}
	\label{eq:delta_u_2}
	\begin{align}
		\mathbf{u}_{k+1/2}^{ij} - \mathbf{u}_{k}^{ij} = & 
		\dfrac{\sqrt{w L_{\alpha\beta}}}{2 m_\alpha}
		\sum_{l=1}^{N_\beta}
		\boldsymbol{\Omega}^{il}_k \times \mathbf{u}_{k}^{il} \\ 
		& +
		\dfrac{\sqrt{w L_{\alpha\beta}}}{2 m_\beta} 
		\sum_{l=1}^{N_\alpha}
		\boldsymbol{\Omega}^{lj}_k \times \mathbf{u}_{k}^{lj} \\*
		& + \mathcal{O}(\Delta t),
	\end{align}
\end{subequations}
where we have used $\boldsymbol{\Omega} \sim \Delta \mathbf{W} \sim \mathcal{O}(\sqrt{\Delta t}\,)$. Using Eq.~(\ref{eq:delta_u_2}) in the one-step difference~(\ref{eq:one_step_diff_1}), we get
\begin{subequations}
	\begin{align}
		\boldsymbol{\epsilon}^{\alpha,i} = & 
		\dfrac{w L_{\alpha\beta}}{2 m_\alpha} \sum_{j=1}^{N_\beta}
		\Bigg[
		\boldsymbol{\Omega}^{ij}_k 
		\times 
		\left(
			\dfrac{1}{m_\alpha} \sum_{l=1}^{N_\beta}
			\boldsymbol{\Omega}^{il}_k \times \mathbf{u}_{k}^{il}
		\right) 
		\label{eq:one_step_diff_2} \\ 
		& \qquad\qquad + 
		\boldsymbol{\Omega}^{ij}_k 
		\times 
		\left(
			\dfrac{1}{m_\beta} \sum_{l=1}^{N_\alpha}
			\boldsymbol{\Omega}^{lj}_k  \times \mathbf{u}_{k}^{lj}
		\right)
		\label{eq:one_step_diff_3} \\ 
		& \qquad\qquad - \dfrac{1}{m_\mathrm{r}}
		b(\mathbf{u}^{ij}_k) \Delta t 
		\Bigg]
		+ \mathcal{O}\big(\Delta t^{3/2}\big).
		\label{eq:one_step_diff_4}
	\end{align}
\end{subequations}

Notice that when $l\neq j$, $\Delta \mathbf{W}^{ij}$ and $\Delta \mathbf{W}^{il}$ are independent Gaussian variables with mean 0, so their expectation vanishes. And so is $\Delta \mathbf{W}^{ij}$ and $\Delta \mathbf{W}^{lj}$. Therefore, when we take the expectation $\langle\boldsymbol{\epsilon}^{\alpha,i} \rangle$, all terms in the summation $\sum_l$ \yfu{vanish} except the terms $l=j$ or $l=i$. Let the first two lines~(\ref{eq:one_step_diff_2}) and (\ref{eq:one_step_diff_3}) within the bracket be $\mathbf{T}^{ij}$, whose expectation is 
\begin{widetext}
	\begin{subequations}
		\begin{align}
			\langle \mathbf{T}^{ij} \rangle \doteq 
			& \left\langle
			\boldsymbol{\Omega}^{ij}_k \times 
			\left(
				\dfrac{1}{m_\alpha} \sum_{l=1}^{N_\beta} \boldsymbol{\Omega}^{il}_k \times \mathbf{u}_{k}^{il}
			\right)
			+ 
			\boldsymbol{\Omega}^{ij}_k \times 
			\left(
				\dfrac{1}{m_\beta} \sum_{l=1}^{N_\alpha} \boldsymbol{\Omega}^{lj}_k \times \mathbf{u}_{k}^{lj}
			\right)
			\right\rangle \\ 
			= & \left\langle
				\boldsymbol{\Omega}^{ij}_k \times 
				\left(
					\dfrac{1}{m_\alpha} \boldsymbol{\Omega}^{ij}_k \times \mathbf{u}_{k}^{ij}
				\right)
				+ 
				\boldsymbol{\Omega}^{ij}_k \times 
				\left(
					\dfrac{1}{m_\beta} \boldsymbol{\Omega}^{ij}_k \times \mathbf{u}_{k}^{ij}
				\right)
			\right\rangle \\ 
			= & \dfrac{1}{m_\mathrm{r}}
			\left\langle
				\boldsymbol{\Omega}^{ij}_k \times 
				\left(
					\boldsymbol{\Omega}^{ij}_k \times \mathbf{u}_{k}^{ij}
				\right)
			\right\rangle \\ 
			= & \dfrac{1}{m_\mathrm{r}}
			|\mathbf{u}^{ij}_k|^{\gamma-2}
			\left\langle
				\big| \mathbf{u}^{ij}_k \cdot \Delta \mathbf{W}^{ij}\big|^2 -
				\big|\mathbf{u}^{ij}_k\big|^2 \cdot \big|\Delta \mathbf{W}^{ij}\big|^2
			\right\rangle \mathbf{u}^{ij}_k.
		\end{align}
	\end{subequations}
\end{widetext}
For any constant vector $\mathbf{A}=(A_1,A_2,A_3)^\intercal$ and 3-D Gaussian random variable $\Delta \mathbf{W}\sim\mathcal{O}(0,\mathrm{I}_3 \Delta t)$, we can use the identities (\ref{eq:gaussian_identities}) to get
\begin{subequations}
	\begin{align}
		\left\langle|\mathbf{A}\cdot \Delta \mathbf{W}|^2 \right\rangle 
		& = \sum_{n=1}^3 \langle A_n^2 \Delta W_n^2 \rangle + \text{cross terms} \\ 
		& = |\mathbf{A}|^2 \Delta t.
	\end{align}
\end{subequations}
Thus, the first line~(\ref{eq:one_step_diff_2}) within the bracket can be simplified as 
\begin{align}
	\left\langle\mathbf{T}^{ij}\right\rangle 
	= & \dfrac{1}{m_\mathrm{r}}
	|\mathbf{u}^{ij}_k|^{\gamma-2}
	\left(
		\big| \mathbf{u}^{ij}_k \big|^2 \Delta t - 
		\big|\mathbf{u}^{ij}_k\big|^2 3 \Delta t
	\right) 
	\mathbf{u}^{ij}_k 
	\\ 
	= & \dfrac{1}{m_\mathrm{r}}
	\left(
		-2 |\mathbf{u}^{ij}_k|^{\gamma} \mathbf{u}^{ij}_k 
	\right) \Delta t
	= \dfrac{1}{m_\mathrm{r}} b(\mathbf{u}^{ij}_k) \Delta t,
\end{align}
which cancels out the third line~(\ref{eq:one_step_diff_4}) in the bracket exactly. So, the weak error $|\langle\boldsymbol{\epsilon}^{\alpha,i}\rangle|$ is at most $\mathcal{O}(\Delta t^{3/2})$. The estimation of the weak error of $\boldsymbol{\epsilon}^{\beta,j}$ yields the same result. 

Therefore, combining the strong and weak error, the fundamental theorem proves that the modified midpoint method converges to SDE~(\ref{eq:inter_particle_sys}) with a strong error of order $1/2$ in the whole time $t_k\in[0,T]$.

\subsection{Numerical demonstration of convergence rate}

This section presents a numerical verification of the convergence rate of the modified midpoint method. The method of calculating strong error numerically is the same as Refs.~{\cite{zhang2020simulating,fu2022explicitly}}. In a fixed duration of time period $t\in[0,T]$, one can define a series of time steps $\Delta t^l \doteq T/2^l$ for $l\in\{0,1,2,...\}$. With each time step $\Delta t^l$, the time is discretized into $t_k^l \in \{t_0^l, t_1^l, t_2^l, ...\}$ where $t_{k+1}^l = t_k^l + \Delta t^l$. For one given Wiener process $W(t)$, it can be discretized using a series of time steps as 
\begin{equation}
	\Delta W^l_k \doteq W(t^l_{k+1}) - W(t^l_{k}) \sim \mathcal{N}(0, \Delta t^l).
\end{equation}
Such a discretization method guarantees the $\Delta W^l_k$ in the series of time steps corresponding to the same Wiener process so that they are correlated:
\begin{equation}
	\Delta W^l_k = \Delta W^{l+1}_{2k} + \Delta W^{l+1}_{2k+1}.
\end{equation}
Using $\Delta W^l_k$ at multiple different levels $l$, one can compare the difference of multiple numerical simulations since they correspond to the same underlying Wiener process. 

We calculate sample paths of SDEs~(\ref{eq:intra_particle_sys_stratonovich})-(\ref{eq:inter_particle_sys_stratonovich_2}) using the modified midpoint method with multiple levels of time step $\Delta t^l$. The cases we tested are the same as the numerical examples in the main article with the same physical parameters. However, we slightly change the numerical parameters to reduce the computational complexity. For intra-species temperature isotropization, we use $N=32$ particles and choose the time period $T=10^{-2} \tau_\mathrm{iso,0}$. One series of sample paths of the $v_x$ component of the first particle is presented in Fig.~\ref{fig:strong_convergence}(a). We can see the sample path gradually converges when the time step becomes smaller.

\begin{figure}[ht]
	\centering 
	\includegraphics[width=0.7\columnwidth]{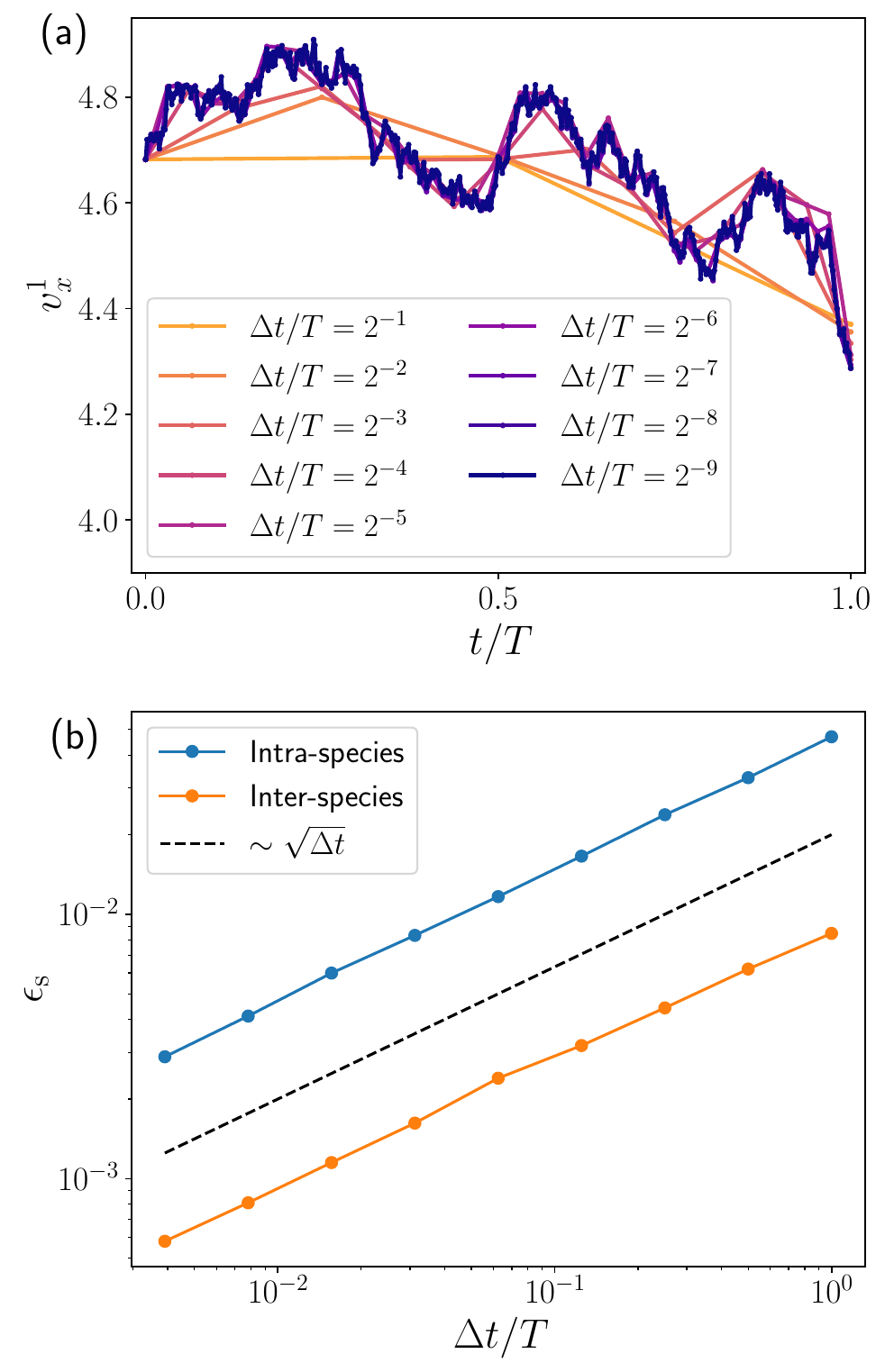} 
	\caption{The strong convergence of the modified midpoint method. (a) the convergence of each sample path with different time steps. (b) the strong convergence rate.}
	\label{fig:strong_convergence}
\end{figure}

To calculate the strong error, we only need to look at the value of the sample path at the last time step. Since the exact is unknown, we calculate the relative error between time steps $\Delta t^l$ and $\Delta t^{l+1}$ as
\begin{equation}
	\epsilon_\mathrm{s}(\Delta t^l) \doteq 
	\left\langle 
		\left| \mathbf{v}(T;\Delta t^{l+1}) - \mathbf{v}(T;\Delta t^{l}) \right|^2
	\right\rangle^{1/2}.
\end{equation}
As shown in Ref.~\cite{fu2022explicitly}, if the algorithm has strong convergence of order $p$, the relative error should have the same convergence order. As numerical examples, we calculated the strong error for both intra- and inter-species collision. For intra-species temperature isotropization, 32 particles are used with time period $T=5 \times 10^{-4} \tau_\mathrm{iso,0}$. For inter-species temperature relaxation, (16, 32) particles are used for two species with time period $T=10^{-4} \tau_\mathrm{12,0}$. The strong errors calculated over 128 independent ensembles are shown in Fig.~\ref{fig:strong_convergence}(b). We can see the strong convergence rate for both species is of order $1/2$. It is worth mentioning that if the time period is too large, the convergence rate would be slower than $1/2$.

\yfu{
\subsection{Complexity reduction \label{sec:complexity_reduction}}

In algorithm~(\ref{eq:modified_midpoint_intra}), each particle interacts with all other particles, so the total cost is at least $\mathcal{O}(N^2)$. Explicitly solving Eq.~(\ref{eq:modified_midpoint_intra}) also requires the expensive inversion of large matrices, see Appendix~\ref{app:explicit_solution_to_modified_midpoint}. To reduce complexity, we can use the technique introduced by Takizuka and Abe \cite{takizuka1977binary}. Instead of counting the interaction among all particles, we can randomly separate particles into groups, and calculate the collision only within each group. 

For example in inter-species collision, $N_\alpha$ and $N_\beta$ particles are randomly divided into $N_\mathrm{g}$ groups of ${\tilde{N}}_\alpha \doteq N_\alpha/N_\mathrm{g}$ and $\tilde{N}_\beta \doteq N_\beta/N_\mathrm{g}$ particles at each time step. Within each group, the distribution function of each species in Eq.~(\ref{eq:particle_distribution}) can be represented by only ${\tilde{N}}_\alpha$ and ${\tilde{N}}_\beta$ particles with the weight of particles increased to $\tilde{w} = w N_\mathrm{g}$. To exclude the self-interaction, the particle weight intra-species collision should be $\tilde{w}=n / (N/N_\mathrm{g}-1)$. This reduced distribution is used to calculate the collision using Eq.~(\ref{eq:modified_midpoint_intra}). In the next time step, particles are divided into different random groups, and the above procedure is repeated.

The grouping procedure decreases the number of interactions each particle encounters and reduces computational complexity. In the case where two species have the same density $n$ (so same particle number $N$), maximum grouping with $N_\mathrm{g}=N$ is equivalent to binary collisions. Then, Eq.~(\ref{eq:modified_midpoint_intra}) represents a pure rotation in the center-of-mass (CM) frame in each binary pair. Let the deflection angle between particle $i$ and $j$ after one step in the CM frame be $\theta$. As shown in Appendix~\ref{app:grouping}, the distribution of $\delta \doteq \tan(\theta/2)$ has variance
\begin{equation}
	\langle \delta^2 \rangle = 
	\left(\dfrac{e_\alpha^2 e_\beta^2 n \ln\Lambda_{\alpha\beta}}{8\pi \epsilon_0^2 m_\mathrm{r}^2 |\mathbf{u}^{ij}_k|^3} \right)\Delta t 
	+ \mathcal{O}(\Delta t^{3/2}),
\end{equation}
where $m_\mathrm{r}=m_\alpha m_\beta / (m_\alpha+m_\beta)$ is the reduced mass of two particles. Therefore, in this limit, our method returns to the pair collision algorithm in Ref.~\cite{takizuka1977binary}.

\begin{figure}[ht]
	\centering 
	\includegraphics[width=0.95\columnwidth]{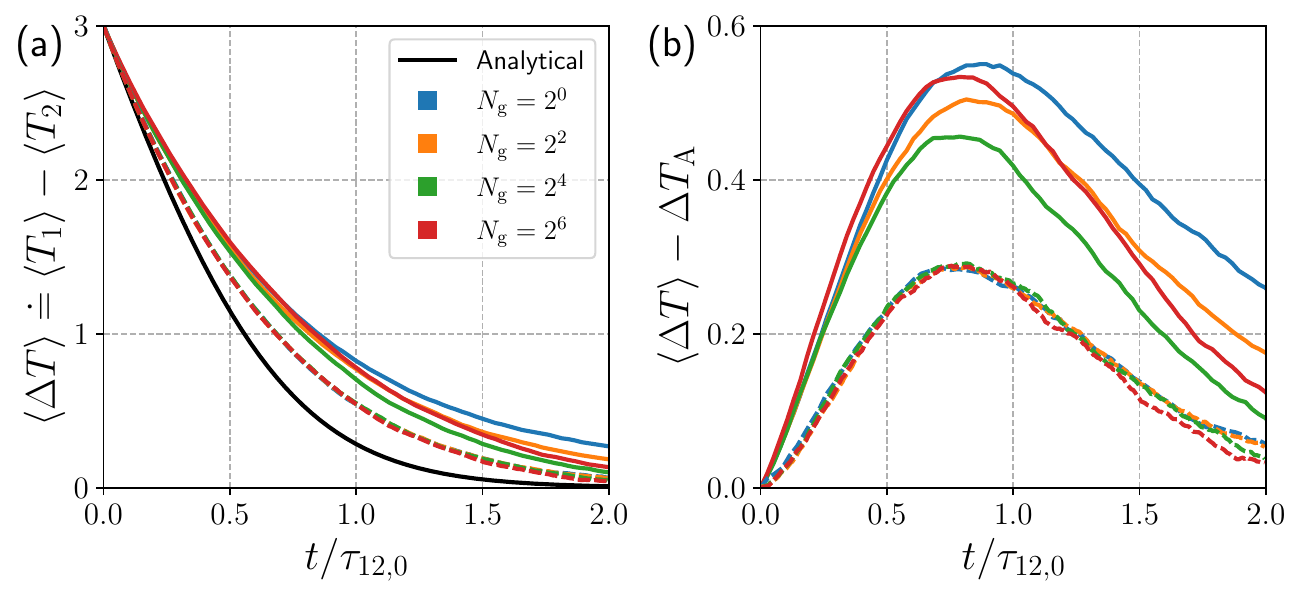} 
	\caption{Effect of particle grouping. (a) Average temperature difference $\langle \Delta T \rangle$ with various grouping $N_\mathrm{g}$. Solid and dashed lines represent $\Delta t / \tau_{11,0} = 10^{-1}, 10^{-3}$, respectively. (b) The differences between $\langle \Delta T \rangle$ and analytical solution $\Delta T_\mathrm{A}$.}
	\label{fig:particle_grouping}
\end{figure}

The particle grouping is demonstrated in the inter-species temperature relaxation case with two different time steps $\Delta t /  \tau_{11,0} = 10^{-1}, 10^{-3}$. Fig.~\ref{fig:particle_grouping} shows the average temperature difference $\langle \Delta T \rangle$ with different numbers of groups $N_\mathrm{g}$. When $\Delta t$ is large, counter-intuitively, $\langle \Delta T \rangle$ becomes closer to the analytical solution when $N_\mathrm{g}$ increases from $1$ to $2^4$. It is speculated that this occurs because the coefficient $\sigma(\mathbf{u})\to\infty$ when $|\mathbf{u}|\to0$. Heuristically, the larger $\sigma(\mathbf{u})$ is, the smaller $\Delta t$ should be to ensure numerical accuracy. When one large $\sigma(\mathbf{u})$ occurs, the number of particles affected in one step decreases as $N_\mathrm{g}$ increases, which reduces the overall error such large $\sigma(\mathbf{u})$ introduces. When $N_\mathrm{g}$ increases further from $2^4$ to $2^6$, the numerical error increases, likely due to the over-simplification of distribution functions. When $\Delta t$ is small, the accuracy among different $N_\mathrm{g}$ becomes negligible. This result suggests that the grouping technique may reduce computational complexity and increase accuracy when the time step is large for some optimal number of groups.

}

\subsection{Distribution of scattering angle in the pair-collision limit \label{app:grouping}}

In this final section, we discuss the distribution of scattering angles in the pair-collision limit when the particle grouping technique is applied. 

For inter-species collision where equal density $n$, the number of particles in each species should be the same as $N$. Therefore, if we divide the particles into $N$ groups, then the collision within each group becomes pair collision, where the weight of each particle becomes $w=n$. In this case, the modified midpoint algorithm within each group becomes
\begin{subequations}
	\begin{align}
		\mathbf{v}^{\alpha}_{k+1} - \mathbf{v}^{\alpha}_k
		&= \dfrac{\sqrt{n L_{\alpha\beta}}}{m_\alpha}
		\left( 
			\dfrac{\mathbf{u}_k \times \Delta \mathbf{W} }{|\mathbf{u}_k|^{1-\gamma/2}}
		\right) \times \mathbf{u}_{k+\frac{1}{2}}, 
		\label{eq:modified_midpoint_inter_pair1}\\
		\mathbf{v}^{\beta}_{k+1} - \mathbf{v}^{\beta}_{k}
		&= - \dfrac{\sqrt{n L_{\alpha\beta}}}{m_\beta} 
		\left( 
			\dfrac{\mathbf{u}_k \times \Delta \mathbf{W} }{|\mathbf{u}_k|^{1-\gamma/2}}
		\right) \times \mathbf{u}_{k+\frac{1}{2}},
		\label{eq:modified_midpoint_inter_pair2}
	\end{align}
\end{subequations}
where the particle index $i$ and $j$ are omitted because there is only one particle in each species within each group, and $\mathbf{u}_k \doteq \mathbf{v}^\alpha_k - \mathbf{v}^\beta_k$.

The difference (\ref{eq:modified_midpoint_inter_pair1})-(\ref{eq:modified_midpoint_inter_pair2}) gives us the evolution of the relative speed $\mathbf{u}$, which is
\begin{equation}
	\mathbf{u}_{k+1} - \mathbf{u}_k
	= \dfrac{\sqrt{n L_{\alpha\beta}}}{m_\mathrm{r}}
	\left( 
		\dfrac{\mathbf{u}_k \times \Delta \mathbf{W} }{|\mathbf{u}_k|^{1-\gamma/2}}
	\right) \times \mathbf{u}_{k+\frac{1}{2}}.
\end{equation}
Multiply the equation above by $\mathbf{u}_{k+1} + \mathbf{u}_k \equiv 2 \mathbf{u}_{k+1/2}$ on both sides, we get $|\mathbf{u}_{k+1}|^2 = |\mathbf{u}_{k}|^2$. So the equation describes a pure rotation of the relative velocity $\mathbf{u}_k$. Let the time step be $\Delta t$, the angle between $\mathbf{u}_k$ and $\mathbf{u}_{k+1}$ be $\theta$ can be calculated as:
\begin{subequations}
	\begin{align}
		& \sin(\theta/2) =
		\dfrac{|\mathbf{u}_{k+1} - \mathbf{u}_k|}{2|\mathbf{u}_k|} \\[2pt]
		= & \dfrac{\sqrt{n L_{\alpha\beta}}}{2 m_\mathrm{r}}
		\dfrac{\left|(\mathbf{u}_k \times \Delta \mathbf{W})\times \mathbf{u}_{k+1/2}\right|}{|\mathbf{u}_k|^{2-\gamma/2}}
		\\[2pt]
		\approx & 
		\dfrac{\sqrt{n L_{\alpha\beta}}}{2  m_\mathrm{r}}
		\dfrac{\left|(\mathbf{u}_k \times \Delta \mathbf{W})\times \mathbf{u}_{k}\right|}{|\mathbf{u}_k|^{2-\gamma/2}}
		+ \mathcal{O}({\Delta t}),
	\end{align}
\end{subequations}
where we have used $\mathbf{u}_{k+1/2} - \mathbf{u}_{k} \sim \Delta \mathbf{W} \sim \mathcal{O}(\sqrt{\Delta t}\,)$.

The variance of $\sin(\theta/2)$ can also be calculated as:
\begin{subequations}
	\begin{align}
		\left\langle [\sin(\theta/2)]^2 \right\rangle
		= & \dfrac{n L_{\alpha\beta}}{4 m_\mathrm{r}^2}
		\dfrac{\left\langle\left|(\mathbf{u}_k \times \Delta \mathbf{W})\times \mathbf{u}_{k}\right|^2\right\rangle}{|\mathbf{u}_k|^{4-\gamma}} \\ 
		& + \mathcal{O}(\Delta t^{3/2}).
	\end{align}
\end{subequations}
The expectation above is
\begin{subequations}
	\begin{align}
		& \left\langle\left|(\mathbf{u}_k \times \Delta \mathbf{W})\times \mathbf{u}_{k}\right|^2\right\rangle \\ 
		= & \left\langle
			\left|
			|\mathbf{u}_k|^2 \Delta \mathbf{W}
			- (\mathbf{u}_k \cdot \Delta \mathbf{W}) \mathbf{u}_k
			\right|^2
		\right\rangle \\
		= & |\mathbf{u}_k|^2
		\left\langle
			|\mathbf{u}_k|^2 |\Delta \mathbf{W}|^2
			- |\mathbf{u}_k \cdot \Delta \mathbf{W}|^2
		\right\rangle \\ 
		= & 2 |\mathbf{u}_k|^4 \Delta t.
	\end{align}
\end{subequations}
So we get
\begin{align}
	& \left\langle [\sin(\theta/2)]^2 \right\rangle = 
	\dfrac{n L_{\alpha\beta}}{4 m_\mathrm{r}^2} 
	2 |\mathbf{u}_k|^\gamma \Delta t
	+ \mathcal{O}(\Delta t^{3/2}) \\ 
	= & \left(\dfrac{e_\alpha^2 e_\beta^2 n \ln\Lambda_{\alpha\beta}}{8\pi \epsilon_0^2 m_\mathrm{r}^2 |\mathbf{u}^{ij}_k|^{-\gamma}} \right)\Delta t 
	+ \mathcal{O}(\Delta t^{3/2}).
\end{align}
Therefore, it is easy to show that $\delta\doteq \tan(\theta/2)$ has the same variance in the leading order, which returns to the result in Takizuka and Abe \cite{takizuka1977binary}.

\bibliography{Landau.bbl}

\end{document}